\newif\ifdraft\drafttrue
\newcommand\modedraft[1]{#1}
\newcommand\todo[1]{{\color{purple}[\textbf{To do:} #1]}}
\newcommand\bmcomment[1]{{\footnotesize \color{blue}[#1 -
    \textbf{Bastien}]}}
\newcommand\amcomment[1]{{\footnotesize \color{OliveGreen}[#1 - \textbf{Nello}]}}
\newcommand\rbcomment[1]{{\footnotesize \color{Orange}[#1 - \textbf{Raphael}]}}
\newcommand\sr[1]{{\color{Red}{[SR: #1]}}}
\newcommand\modedraft[1]{}
\newcommand\todo[1]{}
\newcommand\bmcomment[1]{}
\newcommand\amcomment[1]{}
\newcommand\rbcomment[1]{}
\newcommand\sr[1]{}
\begin{document}
%
\title{Strategy Logic with Imperfect Information}

\author{\IEEEauthorblockN{Rapha\"el Berthon\IEEEauthorrefmark{1},
    Bastien Maubert\IEEEauthorrefmark{2}\thanks{This project has
      received funding from the European Union's Horizon 2020 research
      and innovation programme under the Marie Sklodowska-Curie grant
      agreement No
      709188. 
      It was also supported in part by NSF grants
      CCF-1319459 and IIS-1527668, and by NSF Expeditions in Computing
      project "ExCAPE: Expeditions in Computer Augmented Program
      Engineering". Finally, this work was done while Sasha Rubin was
      a Marie Curie fellow of the Istituto Nazionale di Alta
      Matematica.}, Aniello Murano\IEEEauthorrefmark{2}, Sasha
    Rubin\IEEEauthorrefmark{2} and Moshe
    Y. Vardi\IEEEauthorrefmark{3}}
  \IEEEauthorblockA{\IEEEauthorrefmark{1}\'Ecole Normale Sup\'erieure
    de Rennes, Rennes, France}
  \IEEEauthorblockA{\IEEEauthorrefmark{2}Universit\`a degli Studi di
    Napoli Federico II, Naples, Italy}
  \IEEEauthorblockA{\IEEEauthorrefmark{3}Rice University, Houston,
    Texas, USA } }




%


\newcommand\copyrighttext{%
  \footnotesize \textcopyright 2017 IEEE. Personal use of this material is permitted.
  Permission from IEEE must be obtained for all other uses, in any current or future
  media, including reprinting/republishing this material for advertising or promotional
  purposes, creating new collective works, for resale or redistribution to servers or
  lists, or reuse of any copyrighted component of this work in other works.
  DOI: \href{https://ieeexplore.ieee.org/document/8005136/}{10.1109/LICS.2017.8005136}}
\newcommand\copyrightnotice{%
\begin{tikzpicture}[remember picture,overlay]
\node[anchor=south,yshift=10pt] at (current page.south) {\fbox{\parbox{\dimexpr\textwidth-\fboxsep-\fboxrule\relax}{\copyrighttext}}};
\end{tikzpicture}%
}


\newcommand\UElogo{%
\begin{tikzpicture}[remember picture,overlay]
\node[anchor=south,yshift=45pt,xshift=0cm] at (current page.south) {\includegraphics[height=2.5em]{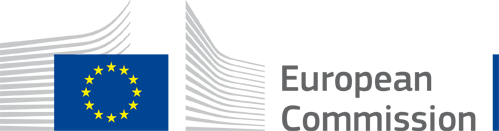}};
\end{tikzpicture}%
}



\maketitle
\copyrightnotice
\UElogo

\begin{abstract}

We introduce an extension of Strategy logic for the imperfect-information 
setting, called \SLi, and study its model-checking problem. As this logic
naturally captures multi-player games with imperfect information, the problem
turns out to be undecidable. We introduce a syntactical class of ``hierarchical
instances'' for which, intuitively, as one goes down the syntactic tree of the
formula, strategy quantifications are concerned with finer observations of the
model. We prove that model-checking \SLi restricted to hierarchical instances
is decidable.  This result, because it allows for complex patterns of
existential and universal quantification on strategies, greatly generalises
previous ones, such as decidability of multi-player games with 
imperfect information and hierarchical observations, and decidability of distributed
synthesis for hierarchical systems.

To establish the decidability result, we introduce and study \QCTLsi, an
extension of \QCTL (itself an extension of \CTL with second-order quantification
over atomic propositions) by parameterising its quantifiers with observations.
The simple syntax of \QCTLsi allows us to provide a conceptually neat reduction
of  \SLi to \QCTLsi  that separates concerns, allowing one to forget about
strategies and players and focus solely on second-order quantification. While the
model-checking problem of \QCTLsi is, in general, undecidable, we identify a
syntactic fragment of hierarchical formulas and prove, using an
automata-theoretic approach, that it is decidable. The decidability result for
\SLi follows since the reduction maps hierarchical instances of \SLi to
hierarchical formulas of \QCTLsi.

\end{abstract}



%
\IEEEpeerreviewmaketitle

\section{Introduction}


Logics for strategic reasoning are a powerful tool for expressing
correctness properties of multi-player graph-games, which in turn are natural
models for reactive systems and discrete event systems.
In particular, \ATLs and its related
logics were introduced to capture the realisability/synthesis problem
for open systems with multiple components. These logics were designed
as extensions of branching-time logics such as \CTLs that allow one to
write alternating properties directly in the syntax, \ie, statements
of the form ``there exist strategies, one for each player in $A$, such
that for all 
strategies of the remaining players, the resulting play satisfies $\phi$''. 
Strategy logic (\SL)~\cite{DBLP:journals/tocl/MogaveroMPV14}
generalises these by treating
strategies as first-order objects $x$ that can be quantified 
$\Estrato{}$ (read ``there exists a strategy $x$'') and bound 
to players $(a,x)$ (read ``player $a$ uses strategy $x$''). 
This syntax has flexibility
very similar to first-order logic, and thus allows one to directly
express many solution concepts from game-theory, e.g., the \SL formula
$\Estrat[x_1] \Estrat[x_2] (a_1,x_1) (a_2,x_2) \wedge_{i=1,2} [ \Estrat[y_i] (a_i,y_i) goal_i ] \to goal_i$ expresses 
the existence of a Nash equilibrium in a two-player game (with individual Boolean objectives).

An essential property of realistic multi-player games is that players
only have a limited view of the state of the system. This is captured 
by introducing partial-observability into the models, \ie, equivalence-relations
$\obs$ (called \emph{observations}) over the state space that specify
indistinguishable states. In the formal-methods literature it is typical to associate
observations to players. In this paper, instead, we associate observations 
to strategies. Concretely, we introduce an extension \SLi of
\SL that annotates the strategy quantifier $\Estrat{}$ by an
observation $\obs$, written $\Estrato{\obs}$. Thus, both the model and
the formulas mention observations $\obs$. This novelty allows one to
express, in the logic, that a player's observation changes over
time.


Our logic \SLi is extremely powerful: it is an extension of \SL (which
is in the perfect-information setting), and of the
imperfect-information strategic logics \ATLsi~\cite{BJ14} and
\ATLssci~\cite{DBLP:journals/corr/LaroussinieMS15}.  A canonical
specification in multi-player game of partial observation is that the
players, say $a_1, \dots, a_n$, can form a coalition and beat the
environment player, say $b$. This can be expressed in \SLi as
$\Phi_{\textsc{Synth}} \egdef \Estrato[x_1]{\obs_1} \dots
\Estrato[x_n]{\obs_n} \Astrato[y]{\obs} (a_1, x_1) \dots (a_n, x_n)
(b,y) \phi$, where $\phi$ is quantifier- and binding-free. Also, \SLi
can express more complicated specifications by alternating
quantifiers, binding the same strategy to different agents and
rebinding (these are inherited from \SL), as well as changing
observations.


The complexity of \SLi is also visible from an algorithmic point of
view.  Its satisfiability problem is undecidable (this is already true
of \SL), and its model-checking problem is undecidable (this is
already true of \ATLsi, in fact, even for the single formula
$\EstratATL[\{a,b\}] \F p$ in systems with three
players~\cite{CT11}). In fact, similar undecidability occurs in
foundational work in multi-player games of partial observation, and in
distributed synthesis~\cite{PR90,DBLP:conf/focs/PetersonR79}. Since
then, the formal-methods community has spent much effort finding
restrictions and variations that ensure decidability~\cite{
  kupermann2001synthesizing,
  peterson2002decision,DBLP:conf/lics/FinkbeinerS05,pinchinat2005decidable,DBLP:conf/atva/ScheweF07,DBLP:conf/icalp/Chatterjee014,
BMM17,BLMR17,DBLP:conf/atva/BerwangerMB15}.  The common thread in these
approaches is that the players' observations (or what they can deduce
from their observations) are hierarchical.

Motivated by the problem of finding decidable extensions of strategy logic in the imperfect-information setting,
we introduce a syntactic class of ``hierarchical instances'' of \SLi,
\ie, formula/model pairs, and prove that the model-checking problem
on this class of instances is decidable. Intuitively, an instance of \SLi is hierarchical if, as one goes
down the syntactic tree of the formula, the observations annotating
strategy quantifications can only become finer. Although the class of
hierarchical instances refers not only to the syntax of the logic but
also to the model, the class is syntactical in the sense that it depends
only on the structure of the formula and the observations in the
model. Moreover, it is easy to check (\ie, in linear time) if an
instance is hierarchical or not.
 
The class of hierarchical instances generalises some existing
approaches and supplies new classes of systems and properties that can
be model-checked. For instance, suppose that there is a total order
$\preceq$ among the players such that $a \preceq b$ implies player
$b$'s observation is finer than player $a$'s observation --- such
games are said to yield ``hierarchical observation'' in~\cite{DBLP:conf/atva/BerwangerMB15}. 
In such games it is known that
synthesis against \CTLs specifications is decidable (in fact, this
holds for $\omega$-regular
specifications~\cite{peterson2002decision,DBLP:conf/atva/BerwangerMB15}).
This corresponds to hierarchical instances of \SLi in which the
observations form a total order in the model and the formula is of the
form $\Phi_{\textsc{Synth}}$ (mentioned above).  On the other hand, in
hierarchical instances of \SLi, the ordering on observations can be a
pre partial-order (\ie, not just total), and one can arbitrarily
alternate quantifiers in the formulas.  For instance, hierarchical
instances allow one to decide if a game that yields hierarchical
information has a Nash equilibrium. 
For example, assuming observations $p_1,p_2,o_1,o_2$ with
$p_i$ finer than $o_2$ (for $i = 1,2$), and $o_2$ finer than $o_1$,
the formula $\Estrato[x_1]{\obs_1} \Estrato[x_2]{o_2} (a_1,x_1)
(a_2,x_2) \wedge_{i=1,2} [ \Estrato[y_i]{p_i} (a_i,y_i) goal_i ] \to
goal_i$ expresses that there exists a strategy profile $(x_1,x_2)$ of
uniform strategies, such that neither player has an incentive to
deviate using a strategy that observes at least as much as the
observations that both players started with. Observe that this formula
is in fact \emph{equivalent} to the existence of a Nash equilibrium,
\ie, to the same formula in which $p_i = o_i$. This shows we can
decide the existence of Nash equilibria in games that yield
hierarchical observation.


As a tool to study \SLi we introduce 
\QCTLsi, an extension to the imperfect-information 
setting of \QCTLs~\cite{DBLP:journals/corr/LaroussinieM14}, 
itself an extension of \CTLs by  second-order
quantifiers over atoms.  This is a low-level logic that does not
mention strategies and into which one can effectively compile
instances of \SLi.  States of the models of the logic \QCTLsi have
internal structure, much like the multi-player game structures from~\cite{peterson2001lower} and distributed
systems~\cite{halpern1989complexity}. Model-checking \QCTLsi is also
undecidable (indeed, we show how to reduce from the MSO-theory of the
binary tree extended with the equal-length predicate, known to be
undecidable~\cite{lauchli1987monadic}).  We introduce the syntactical
class \QCTLsih of hierarchical formulas as those in which innermost
quantifiers observe more than outermost quantifiers, and prove that
model-checking is decidable using an extension of the
automata-theoretic approach for branching-time logics (our decision to
base models of \QCTLsi on local states greatly eases the use of
automata). Moreover, the compilation of \SLi into \QCTLsi preserves
being hierarchical, thus  establishing our main
contribution, \ie, that model checking the hierarchical instances of
\SLi is decidable.


\head{Related work} Formal methods for reasoning about reactive
systems with multiple components have been studied mainly in two
theoretical frameworks: a) multi-player graph-games of
partial-observation~\cite{DBLP:conf/focs/PetersonR79,peterson2002decision,DBLP:conf/atva/BerwangerMB15}
and b) synthesis in distributed
architectures~\cite{PR90,kupermann2001synthesizing,DBLP:conf/lics/FinkbeinerS05,DBLP:conf/atva/ScheweF07,DBLP:journals/fmsd/GastinSZ09}
(the relationship between these two frameworks is discussed
in~\cite{DBLP:conf/atva/BerwangerMB15}).  All of these works consider
the problem of synthesis, which (for objectives in temporal logics)
can be expressed in \SLi using the formula $\Phi_{\textsc{Synth}}$
mentioned above. Limited alternation was studied in
\cite{DBLP:conf/icalp/Chatterjee014} that, in the language of \SLi,
considers the model-checking problem of formulas of the form
$\Estrato[x_1]{\obs_1} \Astrato[x_2]{\obs_2} \Estrato[x_3]{\obs_3}
(a_1, x_1) (a_2,x_2)(a_3, x_3) \phi$, where $\phi$ is an
$\omega$-regular objective. They prove that this is decidable in case
 player $3$ has perfect observation and player $2$ observes at
least as much as player $1$.

In contrast to all these works, formulas of \SLi can express much more complex specifications by alternating quantifiers, sharing strategies, rebinding, and changing observations.  

Recently, \cite{DBLP:conf/atva/BerwangerMB15} generalised the classic
result of \cite{peterson2002decision}: it weakens the assumption of
hierarchical observation to hierarchical information (which are both
static notions), and then, further to dynamic hierarchical information
which allows for the hierarchy amongst players' information to change
along a play. However, it only considers the synthesis problem. 

We are aware of two papers that (like we do) give simultaneous structural constraints on both the formula and the model that result in decidability: in the context of synthesis in distributed architecture with process delays, \cite{DBLP:journals/fmsd/GastinSZ09} considers \CTLs specifications that constrain external variables by the input variables that may effect them in the architecture; and in the context of asynchronous perfect-recall, \cite{pinchinat2005decidable} considers a syntactical restriction on instances for Quantified $\mu$-Calculus with partial observation (in contrast, we consider the case of synchronous perfect recall).

The work closest to ours is~\cite{DBLP:conf/csl/FinkbeinerS10} which introduces a decidable logic $\CL$ in which one can encode many distributed synthesis problems. However, \CL is close in spirit to our \QCTLsih, and is more appropriate as a tool than as a high-level specification logic like \SLi. Furthermore, by means of a natural translation we derive that \CL is strictly included in the hierarchical instances of \SLi (Section~\ref{subsec:SLi-comparison}).  
In particular, we find that hierarchical instances of \SLi can express non-observable goals, while \CL does not. Non-observable goals arise naturally in problems in distributed synthesis~\cite{PR90}.

Finally, our logic \SLi is the first generalisation of 
\SL to include strategies with partial observation and,
{unlike \CL}, to generalise previous logics with
partial-observation strategies, \ie, \ATLsi~\cite{BJ14} and
\ATLssci~\cite{DBLP:journals/corr/LaroussinieMS15}. A comparison of
\SLi to \SL, \ATLsi, \ATLssci and \CL is given in
Section~\ref{subsec:SLi-comparison}.

\head{Outline}
The definition of \SLi and of hierarchical instances, and the discussion about Nash equilibria, are in Section~\ref{sec-SLi}. The definition of \QCTLsi, and its hierarchical fragment \QCTLsih, are in Section~\ref{sec-QCTL-imp-inf}.
The proof that model checking \QCTLsih is decidable, including the required automata preliminaries, are in Section~\ref{sec-decidable}.
The translation of \SLi into \QCTLsi, and the fact that this preserves
hierarchy, are in Section~\ref{sec-modelcheckingSL}.



\section{\SL with imperfect information}
\label{sec-SLi}

In this section we introduce \SLi, an extension of \SL~\cite{DBLP:journals/tocl/MogaveroMPV14} 
to the imperfect-information setting with synchronous perfect-recall. Our logic presents two original features: first, observations are not
bound to players (as is done in extensions of \ATL by
imperfect information~\cite{DBLP:journals/jacm/AlurHK02} or 
logics for reasoning about knowledge~\cite{fagin1995reasoning}), and second, 
we have syntactic observations in the language, which need to be interpreted. 

We follow the presentation of \SL in \cite{DBLP:journals/tocl/MogaveroMPV14}, except that
we make some changes that simplify their presentation
but do not change their semantics, and some that allow us to capture
imperfect information. We introduce the syntax and semantics of \SLi,
carefully detailing these changes. We first fix some notation used throughout the paper.

\subsection{Notation}
Let $\Sigma$ be an alphabet. A \emph{finite} (resp. \emph{infinite}) \emph{word} over $\Sigma$ is an element
of $\Sigma^{*}$ (resp. $\Sigma^{\omega}$). Words are written $w = w_0 w_1 w_2 \ldots$, \ie, indexing begins with $0$. 
The \emph{length} of a finite word $w=w_{0}w_{1}\ldots
w_{n}$ is $|w|\egdef n+1$, and $\last(w)\egdef w_{n}$ is its last
letter.
Given a finite (resp. infinite) word $w$ and $0 \leq i \leq |w|$ (resp. $i\in\setn$), we let $w_{i}$ be the
letter at position $i$ in $w$, $w_{\leq i}$ is the prefix of $w$ that
ends at position $i$ and $w_{\geq i}$ is the suffix of $w$ that starts
at position $i$.
We write $w\pref w'$ if $w$ is a prefix of $w'$, and $\FPref{w}$ is
the set of finite prefixes of word $w$.
Finally, 
the domain of a mapping $f$ is written $\dom(f)$,  and for $n\in\setn$ we let $[n]\egdef\{i \in \setn: 1 \leq i \leq n\}$.
The literature sometimes refers to ``imperfect information'' and sometimes to ``partial observation''; we will use the terms interchangeably.
  
\subsection{Syntax}
\label{sec-SLi-definition}
 
The syntax of \SLi is similar to that of strategy logic \SL as defined
in \cite{DBLP:journals/tocl/MogaveroMPV14}: the only difference is
that we annotate strategy quantifiers $\Estrato{}$ by
\emph{observation symbols} $\obs$.  For the rest of the paper, for
convenience we fix a number of parameters for our logics and models:
$\APf$ is a finite set of \emph{atomic propositions}, $\Agf$ is a
finite set of \emph{agents} or \emph{players}, $\Varf$ is a finite set
of \emph{variables} and $\Obsf$ is a finite set of \emph{observation
  symbols}.  When we consider model-checking problems, these data are
implicitly part of the input.
%

\begin{definition}[\SLi Syntax]
  \label{def-SLi}
  The syntax of \SLi is defined by the following grammar:
  \[
  \phi \egdef p 
  \mid \neg \phi 
  \mid \phi\ou\phi 
  \mid \X \phi 
  \mid \phi \until \phi 
  \mid \Estrato{\obs}\phi 
  \mid \bind{\var}\phi
  \]
 where 
  $p\in\APf$, $\var\in\Varf$, $\obs\in\Obsf$ and $a\in\Agf$.
\end{definition}

We use  abbreviations $\top\egdef p\ou\neg p$, $\perp\egdef\neg\top$, $\phi\impl\phi'\egdef \neg \phi \ou \phi'$,
$\phi\equivaut\phi'\egdef \phi\impl\phi'\et \phi'\impl\phi$ for
boolean connectives,
 $\F\phi \egdef \top \until \phi$,  $\G\phi \egdef \neg \F
\neg \phi$ for temporal operators, and finally $\Astrato{\obs}\phi\egdef\neg\Estrato{\obs}\neg\phi$.

The notion of free variables and sentences are defined as for \SL. 
We recall these for completeness:
A variable $\var$ appears \emph{free} in a formula $\phi$ if it
appears out of the scope of a strategy quantifier, and a
player $\ag$ appears free in $\phi$ if a temporal operator (either $\X$ or
$\until$) appears in $\phi$ out of the scope of any binding for player $\ag$. 
We let $\free(\phi)$ be the set of variables and players  that appear
free in $\phi$. If $\free(\phi)$ is empty, $\phi$ is a \emph{sentence}.

\subsection{Semantics}
\label{sec-SLmodels}

The models of \SLi are like those of \SL, \ie, concurrent game structures, but
extended by a finite set of observations $\Obsf$ and, for each $o \in \Obsf$, 
by an equivalence-relation $\obsint(o)$ over positions that represents what a player
using a strategy with that observation can see. That is, $\obsint(o)$-equivalent positions
are indistinguishable to a player using a strategy associated with observation
$o$.


\begin{definition}[\CGSi]
  \label{def-CGSi}
  A \emph{concurrent game structure with imperfect information} (or
  \CGSi for short)
is a structure  $\CGSi=(\Act,\setpos,\trans,\val,\pos_\init,\obsint)$ where
   \begin{itemize}
    \item $\Act$ is a finite non-empty set of \emph{actions},
    \item $\setpos$ is a finite non-empty set of \emph{positions},
   \item $\trans:\setpos\times \Mov^{\Agf}\to \setpos$ is a \emph{transition function}, 
  \item $\val:\setpos\to 2^{\APf}$ is a \emph{labelling function},  
  \item $\pos_\init \in \setpos$ is an \emph{initial position},
  \item $\obsint:\Obsf\to 2^{\setpos\times\setpos}$ is an 
  \emph{observation interpretation} mapping observations 
  to equivalence relations on positions.
 \end{itemize}
\end{definition}

We may  write $\obseq$ for $\obsint(\obs)$, and $\pos\in\CGS$ for $\pos\in\setpos$.

The notions of joint actions, plays, strategies and assignments
are similar to those for \SL. We will recall these for completeness.
Since the main difference between the
semantics of \SL and \SLi is in the strategy-quantification case where
we require strategies to be consistent with observations, we define
synchronous perfect recall and $\obs$-strategies before defining the
semantics of \SLi. Finally, we mention that we make some
(inconsequential) simplifications to the  semantics of \SL
as proposed in \cite{DBLP:journals/tocl/MogaveroMPV14}: i) we dispense
with partial assignments and only consider complete assignments, ii)
our semantics are w.r.t. a finite play instead of a position (these simplifications do not change the
expressive power of \SL).


\head{Joint actions}
In a position $\pos\in\setpos$, each player $\ag$ chooses an action $\mova\in\Mov$, 
and the game proceeds to position
$\trans(\pos, \jmov)$, where $\jmov\in \Mov^{\Agf}$ stands for the \emph{joint action}
$(\mova)_{\ag\in\Agf}$. Given a joint action
$\jmov=(\mova)_{\ag\in\Agf}$ and $\ag\in\Agf$, we let
$\jmov_{\ag}$ denote $\mova$. 
For each position $\pos\in\setpos$, $\val(\pos)$ is the set of atomic
propositions that hold in $\pos$.

\head{Plays and strategies}
A \emph{finite} (resp. \emph{infinite}) \emph{play} is a finite (resp. infinite)
word $\fplay=\pos_{0}\ldots \pos_{n}$ (resp. $\iplay=\pos_{0} \pos_{1}\ldots$)
such that $\pos_{0}=\pos_{\init}$ and for all $i$ with $0\leq i<|\fplay|-1$ (resp. $i\geq 0$), there exists a joint action $\jmov$
such that $\trans(\pos_{i}, \jmov)=\pos_{i+1}$. 
We let $\FPlay$ be the set of finite plays. 
A \emph{strategy} is a function $\strat:\FPlay \to \Mov$, and the set of all strategies
is denoted $\setstrat$.

\head{Assignments} 
An \emph{assignment} is a function $\assign:\Agf\union\Varf \to \setstrat$, assigning a strategy to
each player and variable. For an assignment
$\assign$,  player $a$ and  strategy $\strat$,
$\assign[a\mapsto\strat]$ is the assignment that maps $a$ to $\strat$ and is equal to
$\assign$ on the rest of its domain, and 
$\assign[\var\mapsto \strat]$ is defined similarly, where $\var$ is a variable. 

\head{Outcomes}
For an assignment $\assign$ and a finite play $\fplay$, we let
$\out(\assign,\fplay)$ be the only infinite play that starts with
$\fplay$ and is then extended by letting players follow the strategies
assigned by $\assign$. Formally,
 $\play(\assign,\fplay)\egdef \fplay \cdot \pos_{1}\pos_{2}\ldots$ where,
 for all $i\geq 0$,
 $\pos_{i+1}=\trans(\pos_{i},\jmov)$, where
 $\pos_{0}=\last(\fplay)$ and
 $\jmov=(\assign(a)(\fplay\cdot\pos_{1}\ldots\pos_i))_{a\in\Agf}$.

\head{Synchronous perfect recall}
In this work we consider
players with \emph{synchronous perfect recall}, meaning that each player
remembers the whole history of a play, a classic assumption in games
with imperfect information and logics of knowledge and time. Each observation
 relation is thus extended to finite plays
as follows: $\fplay \obseq \fplay'$ if $|\fplay|=|\fplay'|$
and $\fplay_{i}\obseq\fplay'_{i}$ for every $i\in\{0,\ldots,
|\fplay|-1\}$.  For $\obs\in\Obsf$, an \emph{$\obs$-strategy} is a strategy $\strat:\setpos^{+} \to \Mov$ 
such that $\strat(\fplay)=\strat(\fplay')$ whenever $\fplay \obseq
\fplay'$. The latter  constraint captures the essence of
imperfect information, which is that players can base their strategic
choices only on the information available to them. 
For $\obs\in\Obsf$ we let $\setstrato$ be the set of
all $\obs$-strategies. 

\begin{definition}[\SLi Semantics]
\label{def-SL-semantics}
The semantics $\CGSi,\assign,\fplay\modelsSL \phi$ is defined inductively, where
$\phi$ is an \SLi-formula,
$\CGSi=(\Act,\setpos,\trans,\val,\pos_\init,\obsint)$ is a \CGSi, 
$\fplay$ is a finite play, and 
$\assign$ is an assignment:
\[
\begin{array}{l}
\begin{array}{lcl}
 \CGSi,\assign,\fplay\modelsSL p & \text{ if } & p\in\val(\last(\fplay))\\
 \CGSi,\assign,\fplay\modelsSL \neg\phi & \text{ if } &
  \CGSi,\assign,\fplay\not\modelsSL\phi\\
 \CGSi,\assign,\fplay\modelsSL \phi\ou\phi' & \text{ if } &
  \CGSi,\assign,\fplay\modelsSL\phi \text{ or }
  \CGSi,\assign,\fplay\modelsSL\phi' \\
 \CGSi,\assign,\fplay\modelsSL\Estrato{\obs}\phi  & \text{ if } & 
\exists\,   \strat\in\setstrato \text{ s.t. } 
  \CGSi,\assign[\var\mapsto\strat],\fplay\modelsSL \phi\\
 \CGSi,\assign,\fplay\modelsSL(a,\var)\phi & \text{ if } &
 \CGSi,\assign[a\mapsto\assign(\var)],\fplay\modelsSL \phi\vspace{5pt}\\
\end{array}\\
\mbox{and, writing $\iplay=\out(\assign,\fplay)$:}\\[5pt]

\begin{array}{p{72pt}cl}
  $\CGSi,\assign,\fplay\modelsSL\X\phi$ & \text{ if } &
  \CGSi,\assign,\iplay_{\leq |\fplay|}\modelsSL\phi\\
$\CGSi,\assign,\fplay\modelsSL\phi\until\phi'$ & \text{ if } & \exists\, i\geq 0
   \mbox{ s.t. }\CGSi,\assign,\iplay_{\leq |\fplay|+i-1}\modelsSL \phi'\\ 
   & & \text{ and } \forall\, j \text{ s.t. } 0\leq j <i,\;\\
& & \CGSi,\assign,\iplay_{\leq |\fplay|+j-1}\modelsSL \phi
\end{array}
\end{array}
\]
\end{definition}
To explain the temporal operators, we remind the reader that positions begin at $0$ and thus
$\pi_{n}$ is the $(n+1)$-st position of $\pi$.
Clearly, the satisfaction of a sentence is independent of the
assignment. For an \SLi sentence $\phi$ we thus let
$\CGSi,\fplay\models\phi$ if $\CGSi,\assign,\fplay\models\phi$ for some
assignment $\assign$, and we write $\CGSi \models \phi$ if $\CGSi, \pos_\init \models \phi$.

\subsection{Model checking and hierarchical instances}


\head{Model Checking} 
We now introduce the main decision problem of this paper, \ie, 
the model-checking problem for \SLi.  An \emph{\SLi-instance} is a formula/model pair $(\Phi,\CGSi)$ where $\Phi \in \SLi$ and $\CGSi$ is a \CGSi.
The \emph{model-checking problem} for \SLi is the decision problem that,
given an \SLi-instance $(\Phi,\CGSi)$, returns `yes' if
$\CGSi\models\Phi$, and `no' otherwise.

It is well known that
deciding the existence of winning  strategies in
multi-player games with
imperfect information is undecidable for reachability
objectives~\cite{peterson2001lower}. Since this problem is easily
reduced to the  model-checking problem for \SLi, we get the following result:

\begin{theorem}
  \label{theo-undecidable-SLi}
  The model-checking problem for \SLi is undecidable.
\end{theorem}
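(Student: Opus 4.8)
The plan is to reduce from a problem already known to be undecidable: deciding whether a team of players with imperfect information has a joint strategy enforcing a reachability objective against an adversary. This is undecidable already for a small, fixed number of players --- two team members whose observations are incomparable, playing against a fully-informed adversary --- as established in~\cite{peterson2001lower}; indeed, as recalled above, even the single \ATLsi-formula $\EstratATL[\{a,b\}]\F p$ over three-player systems is undecidable~\cite{CT11}. So it suffices to exhibit a computable map from instances of this problem to \SLi-instances that preserves the answer.

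Given such a game --- a finite arena together with an indistinguishability relation for each team member $a_1,\dots,a_n$ and a set $T$ of target positions --- I would first build a \CGSi $\CGSi=(\Act,\setpos,\trans,\val,\pos_\init,\obsint)$ whose actions, positions, transition function and initial position are inherited directly from the arena. I would fix observation symbols $\obs_1,\dots,\obs_n$ for the team members and one symbol $\obs$ for the adversary $b$, setting $\obsint(\obs_i)$ to be the indistinguishability relation of player $a_i$ and $\obsint(\obs)\egdef\{(\pos,\pos):\pos\in\setpos\}$, so that the adversary has perfect information (this only strengthens the adversary, preserving hardness, and is the standard formulation). Finally I would introduce a fresh atomic proposition $p$ and let $\val$ mark with $p$ exactly the positions in $T$.

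The reduction is then completed by the sentence $\Phi \egdef \Estrato[x_1]{\obs_1}\cdots\Estrato[x_n]{\obs_n}\Astrato[y]{\obs}\,(a_1,x_1)\cdots(a_n,x_n)(b,y)\,\F p$, that is, an instance of the schema $\Phi_{\textsc{Synth}}$. I would then argue that $\CGSi\models\Phi$ iff the team wins the original game. By Definition~\ref{def-SL-semantics} and the synchronous-perfect-recall semantics of $\obs$-strategies, the outer existential quantifiers range over exactly the uniform strategies available to each team member $a_i$; the universal quantifier $\Astrato[y]{\obs}$ together with the bindings makes the play range over all perfect-recall responses of the adversary; and $\out$ together with $\F p$ asserts precisely that the resulting play reaches $T$. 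Since both settings use synchronous perfect recall and the very same observation partitions, a winning team strategy profile \emph{is} a witness for the existential block, and conversely.

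The only point requiring care --- and the main, if modest, obstacle --- is to verify that the correspondence between assignments and outcomes in the \CGSi on the one hand, and plays of the source game on the other, is exact: that $\obs_i$-consistency coincides with the information constraint on $a_i$, and that, with the team strategies fixed, the set of outcomes obtained as the adversary ranges over all its $\obs$-strategies is exactly the set of plays consistent with those team strategies (this holds because a perfect-information, perfect-recall strategy can realise any branch). As both models share the perfect-recall assumption and the identical partitions, these equivalences are immediate, and the construction is clearly computable --- indeed linear --- in the size of the input. Undecidability of the \SLi model-checking problem then follows.
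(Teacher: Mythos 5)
Your proof is correct and follows essentially the same route as the paper, which simply cites the undecidability of deciding the existence of winning strategies in multi-player reachability games with imperfect information~\cite{peterson2001lower} and observes that this problem reduces easily to \SLi model checking via the formula $\Phi_{\textsc{Synth}}$ with a reachability objective. You have merely spelled out the details of that easy reduction, and they check out.
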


\head{Hierarchical instances}
We isolate a sub-problem obtained by restricting attention to \emph{hierarchical instances}. 
Intuitively, an \SLi-instance $(\Phi,\CGSi)$ is hierarchical if, as one goes down a path in the syntactic tree of $\Phi$, 
the observations tied to quantifications become finer. 
%
%

\begin{definition}[Hierarchical instances]
  \label{def-hierarchical-formula}
An \SLi-instance $(\Phi,\CGSi)$ is \emph{hierarchical} if
for all
  subformulas $\phi_{1},\phi_{2}$ of $\Phi$ of the form
  $\phi_{2}=\Estrato{\obs_{2}}\phi'_{2}$ and
  $\phi_{1}=\Estrato[y]{\obs_{1}}\phi'_{1}$ where  
  $\phi_{1}$  
  is a subformula of $\phi'_{2}$, it holds that 
  $\obsint(\obs_{1})\subseteq \obsint(\obs_{2})$.
\end{definition}

If $\obsint(\obs_{1})\subseteq \obsint(\obs_{2})$ we say that $\obs_{1}$ is \emph{finer} than $\obs_{2}$ in $\CGSi$, and that $\obs_{2}$ is \emph{coarser} than $\obs_{1}$ in $\CGSi$.  Intuitively, this means that a player with observation
$\obs_{1}$ observes game $\CGSi$ no worse than, \ie, is not less informed, 
a player with observation $\obs_{2}$.

\begin{example}[Security levels]
  We illustrate hierarchical instances in a ``security levels''
  scenario, where higher levels have access to more data (\ie, can
  observe more).  Assume that the \CGSi $\CGSi$ has
  $\obsint(\obs_{3})\subseteq \obsint(\obs_{2}) \subseteq
  \obsint(\obs_{1})$ (level $3$ is the highest security
  clearance, level $1$  is the lowest). Let
  $\phi = (a_1,x_1) (a_2,x_2) (a_3,x_3) \G p$. The \SLi formula
  $\Phi \egdef \Estrato[x_1]{\obs_1} \Astrato[x_2]{\obs_2}
  \Estrato[x_3]{\obs_3} \phi$ and $\CGSi$ together form a hierarchical instance. It expresses that  player $a_1$ (with lowest
  clearance) can collude with player $a_3$ (with highest
  clearance) to ensure a safety property $p$, even in the presence of
  an adversary $a_2$ (with intermediate clearance), as long as the
  strategy used by $a_3$ can depend on the strategy used by
  $a_2$.

  On the other hand,  formula
  $\Estrato[x_1]{\obs_1} \Estrato[x_3]{\obs_3} \Astrato[x_2]{\obs_2}
  \phi$, which is similar to $\Phi$ except that the strategy used by $a_3$ cannot depend
  on the adversarial strategy used by $a_2$, does not form a
  hierarchical instance  with $\CGSi$.
\end{example}


Here is the main contribution of this paper:

\begin{theorem}
\label{theo-SLi}
The model-checking problem for \SLi restricted to the class of hierarchical instances is decidable.
\end{theorem}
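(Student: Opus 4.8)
The plan is to establish decidability by compiling \SLi into the auxiliary logic \QCTLsi in a hierarchy-preserving way, and then solving the hierarchical fragment \QCTLsih directly with tree automata; Theorem~\ref{theo-undecidable-SLi} shows why both steps must exploit the hierarchy. First I would introduce \QCTLsi (Section~\ref{sec-QCTL-imp-inf}), whose quantifiers $\exists^{\obs} p$ range over $\obs$-\emph{uniform} relabellings of an atomic proposition $p$, \ie labellings that are constant on $\obsint(\obs)$-indistinguishable nodes. The key point is that the perfect-recall $\obs$-strategy condition --- a strategy commits to the same action on $\obs$-equivalent finite plays --- is exactly an $\obs$-uniformity constraint on a labelling of the tree of plays. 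This makes \QCTLsi the right target: it speaks of uniform second-order quantification and nothing about strategies or players.

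Second, I would give the reduction (Section~\ref{sec-modelcheckingSL}). To an \SLi-instance $(\Phi,\CGSi)$ I associate a \QCTLsi-instance over essentially the same model (enriched with fresh atoms) by encoding each strategy quantifier $\Estrato{\obs}$ as a block of quantifiers $\exists^{\obs}$ over fresh atoms that encode the chosen action at each node; a binding $(a,x)$ and the temporal operators are translated by reading off these atoms and following transitions. Since each quantifier retains its observation symbol and the translation respects subformula nesting, a \emph{hierarchical} \SLi-instance is mapped to a \emph{hierarchical} \QCTLsi-formula, \ie one in which, along every branch of the syntactic tree, the observations annotating quantifiers become finer. Thus the whole problem reduces to model checking \QCTLsih.

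Third, the core is to prove that model checking \QCTLsih is decidable (Section~\ref{sec-decidable}) by an automata-theoretic method adapted to imperfect information. I would unfold $\CGSi$ from $\pos_\init$ into its infinite tree of finite plays, labelled by atoms, and lift each relation $\obsint(\obs)$ to finite plays as in the definition of $\obs$-strategies, so that the ``$\obs$-tree'' recording only what $\obs$ sees is well defined. Then I process the formula bottom-up, building an alternating parity tree automaton for each subformula: Boolean and temporal operators are handled as in the classical \CTLs constructions, while a quantifier $\exists^{\obs} p$ is treated by replacing its alternating subautomaton with an equivalent nondeterministic parity automaton, \emph{narrowing} it to the $\obs$-tree so that projecting away $p$ forces $\obs$-uniformity, and then \emph{widening} the result back to the ambient play-tree so it can be combined with the surrounding operators; negations are realised by complementation performed on the nondeterministic automaton over the narrowed view. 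At the top one obtains an ordinary parity tree automaton whose acceptance of the model's tree is decidable.

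The hard part, and the place where hierarchy is indispensable, is proving soundness of this narrow--project--widen cycle along the entire formula. When I narrow for observation $\obs$, any quantifier nested more deeply --- hence processed earlier --- carries, by hierarchy, a \emph{finer} observation; its atom has therefore already been projected away and cannot obstruct the narrowing to the coarser $\obs$-tree, and the remaining labellings that must survive narrowing (those of coarser, already-fixed quantifiers) are constant on $\obsint(\obs)$-classes and so are well defined on the $\obs$-tree. The crux is to formalise an invariant guaranteeing that each $\exists^{\obs} p$ is resolved while working over a view at least as coarse as $\obs$, and that widening preserves the recognised language; this is exactly what fails without hierarchy, in line with the general undecidability. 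Once the invariant is maintained, decidability of \QCTLsih follows, and the hierarchy-preserving reduction yields Theorem~\ref{theo-SLi}.
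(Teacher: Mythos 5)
Your overall architecture --- compile \SLi into \QCTLsi in a hierarchy-preserving way, then decide the hierarchical fragment by an automata construction built around narrowing, nondeterminisation and projection --- is exactly the paper's, and your identification of perfect-recall $\obs$-strategies with $\obs$-uniform labellings of the tree of plays is the right key observation. The reduction step is essentially correct as sketched (the paper realises ``the same model enriched with fresh atoms'' concretely as a compound Kripke structure whose local states are the $\obsint(\obs)$-equivalence classes of positions plus the position itself, and translates each observation symbol to the set of indices of all coarser observations precisely so that hierarchical instances map to hierarchical formulas).

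The gap is in the core automata construction for \QCTLsih. You propose to unfold the model into its \emph{labelled} tree of plays and, at each quantifier $\exists^{\obs}p$, to narrow that tree to the $\obs$-view, project, and widen back. But the narrowing of a labelled tree is not well defined: the free atomic propositions --- those labelling the model itself (and, in the \SLi application, the propositions recording the current position) --- are in general not uniform with respect to $\obs$, so two nodes merged by the narrowing may carry contradictory labels. Your invariant only accounts for \emph{quantified} atoms (inner, finer ones already projected away; outer, coarser ones eventually uniform), and ``widening the result back to the ambient play-tree'' does not repair the loss, since the outer temporal operators still need the non-uniform model labels that the narrowing destroyed. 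The paper's construction avoids feeding the unfolding to the automata at all: the automaton built for each subformula depends on the model and a state of it, guesses a path in the \CKS \emph{internally} (its state space contains the states of the \CKS) and evaluates every free proposition from that guessed state, while the input tree starts empty and carries \emph{only} the labellings of quantified propositions, which are uniform by construction and hence survive narrowing. Correctness is then stated not as ``the automaton accepts the unfolding'' but as ``the automaton accepts the input tree iff the \emph{merge} of its widening with the unfolding satisfies the subformula'' (Lemma~\ref{lem-final}), and outer operators are handled directly on the narrowed trees rather than after a widening step. Without this separation of quantified from free propositions the induction you describe cannot be carried through, so the construction needs to be restructured around it.
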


This is proved in Section~\ref{sec-modelcheckingSL} by reducing it to the model-checking 
problem of the hierarchical fragment of a logic called \QCTLsi, which we introduce, and 
prove decidable, in Section~\ref{sec-QCTL-imp-inf}.
We now give an important corollary of Theorem~\ref{theo-SLi}. 

A Nash equilibrium in a game is a tuple of strategies such that no
player has the incentive to deviate. Assuming that  $\Agf = \{a_i : i \in
[n]\}$ and goals are written
in \SLi, say $goal_i$ for $i \in [n]$,  the following formula 
of \SLi expresses the
existence of a Nash equilibrium:
\begin{align*}
 \Phi_\textsc{NE} \egdef & \Estrato[x_1]{\obs_1} \dots  \Estrato[x_n]{\obs_{n}}  (a_1,x_1) \dots (a_n,x_n) \Psi_\textsc{NE}
\end{align*}
where $\Psi_\textsc{NE} \egdef \bigwedge_{i \in [n]} \left[
 ( \Estrato[y_i]{\obs_i} (a_i,y_i) goal_i ) \to goal_i \right]$.

A \CGSi $\CGSi$ is said to \emph{yield hierarchical
  observation}~\cite{DBLP:conf/atva/BerwangerMB15} if
the ``finer-than'' relation is a total ordering, \ie, if for all
$\obs,\obs' \in \Obsf$, either $\obsint(\obs) \subseteq \obsint(\obs')$
or $\obsint(\obs') \subseteq \obsint(\obs)$.

Note that even if $\CGSi$ yields hierarchical information, the instance $(\Phi_\textsc{NE},\CGSi)$  is \emph{not}
 hierarchical (unless $\obsint(\obs_i) = \obsint(\obs_j)$ for all
$i,j \in [n]$). \emph{Nonetheless}, we can decide if a game that
yields hierarchical observation has a Nash equilibrium:

\begin{corollary}
The following problem is decidable: given a \CGSi that yields hierarchical observation, whether $\CGSi \models \Phi_\textsc{NE}$.
\end{corollary}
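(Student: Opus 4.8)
The plan is to rewrite the pair $(\Phi_\textsc{NE},\CGSi)$ into an \emph{equivalent hierarchical} instance and then invoke Theorem~\ref{theo-SLi}. Throughout I take the objectives $goal_i$ to be quantifier-free \SLi formulas, as in the classical definition of Nash equilibrium, so that the truth of $goal_i$ at $(\assign,\fplay)$ depends only on the outcome play $\out(\assign,\fplay)$. Since $\CGSi$ yields hierarchical observation, ``finer than'' is a total order on $\Obsf$; let $\obs_\star\in\{\obs_1,\dots,\obs_n\}$ be a finest element, so that $\obsint(\obs_\star)\subseteq\obsint(\obs_j)$ for every $j\in[n]$. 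Define $\Phi'_\textsc{NE}$ from $\Phi_\textsc{NE}$ by (i) permuting the outer block $\Estrato[x_1]{\obs_1}\cdots\Estrato[x_n]{\obs_n}$ so that the annotating observations are $\subseteq$-non-increasing (coarsest outermost, finest innermost), which is possible because they are totally ordered, and (ii) replacing, in each conjunct of $\Psi_\textsc{NE}$, the deviation quantifier $\Estrato[y_i]{\obs_i}$ by $\Estrato[y_i]{\obs_\star}$.

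I claim $(\Phi'_\textsc{NE},\CGSi)$ is hierarchical. Two existential quantifiers of $\Phi'_\textsc{NE}$ are nested only as follows: (a) two quantifiers of the reordered outer block, where the inner one is annotated by a $\subseteq$-smaller observation by construction; or (b) a deviation quantifier $\Estrato[y_i]{\obs_\star}$ below the whole outer block, where $\obsint(\obs_\star)\subseteq\obsint(\obs_j)$ holds by the choice of $\obs_\star$. Distinct deviation quantifiers lie in different conjuncts and are therefore incomparable, and the bindings $(a_j,x_j)$ and $(a_i,y_i)$ are not quantifiers, so they play no role in Definition~\ref{def-hierarchical-formula}. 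Hence every nested pair meets the fineness condition.

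It remains to show $\CGSi\models\Phi_\textsc{NE}$ iff $\CGSi\models\Phi'_\textsc{NE}$. The reordering (i) is harmless because existential strategy quantifiers commute: they bind distinct variables, and a strategy never depends on another strategy. The heart of the matter is that (ii) preserves truth, and here I use that a unilateral deviation is evaluated against a \emph{fixed profile}. Fix any assignment $\assign$ binding every player $a_j$ to a strategy (as holds when $\Psi_\textsc{NE}$ is evaluated) and any $\fplay$. For player $i$, the strategies of all $j\neq i$ are frozen, so $\out(\assign[a_i\mapsto\strat],\fplay)$ is a function of $\strat$ alone, and the set $\{\out(\assign[a_i\mapsto\strat],\fplay):\strat\text{ a strategy}\}$ is unchanged when $\strat$ ranges only over $\obs$-strategies, for any $\obs$. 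Indeed, given any outcome $\iplay$ realised by some $\strat$, the open-loop strategy $\strat'(\rho)\egdef\strat(\iplay_{\leq|\rho|-1})$, which depends only on $|\rho|$, reproduces $\iplay$ against the frozen profile and is an $\obs$-strategy for \emph{every} $\obs$, because $\obs$-indistinguishable plays have equal length by synchronous perfect recall. Consequently, for quantifier-free $goal_i$, the premises $\Estrato[y_i]{\obs_i}(a_i,y_i)goal_i$ and $\Estrato[y_i]{\obs_\star}(a_i,y_i)goal_i$ hold at $(\assign,\fplay)$ under exactly the same conditions. Thus $\Psi_\textsc{NE}$ and its modification agree under every profile assignment, so $\Phi_\textsc{NE}$ and $\Phi'_\textsc{NE}$ agree as sentences on $\CGSi$; applying Theorem~\ref{theo-SLi} to the hierarchical instance $(\Phi'_\textsc{NE},\CGSi)$ decides $\CGSi\models\Phi'_\textsc{NE}$, hence $\CGSi\models\Phi_\textsc{NE}$.

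The main obstacle is justifying modification (ii): a priori, letting a deviator observe the finest $\obs_\star$ rather than its own $\obs_i$ grants it strictly more information and ought to make profitable deviations easier to find, breaking the equivalence. The resolution is the observation-independence of best responses established above --- because the deviation is played against a fixed, hence deterministic, profile, the deviator can realise any outcome by an open-loop strategy, so its observation is irrelevant to what it can achieve. This is precisely what permits raising the deviation observations to $\obs_\star$ at no cost, which is exactly what is needed to turn the instance hierarchical; the remaining steps are routine.
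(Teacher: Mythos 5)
Your proof is correct and follows essentially the same route as the paper's: both make the instance hierarchical by lowering the deviation quantifiers to a finest observation, justified by the same key fact that a unilateral deviation against a frozen profile can be realised by a uniform strategy (your open-loop $\strat'$ versus the paper's strategy that copies $\sigma$ along $\pi$ and is arbitrary-but-uniform elsewhere), and then invoke Theorem~\ref{theo-SLi}. The only cosmetic differences are that the paper adjoins a fresh perfect-information observation $\obs_{p}$ to the model instead of reusing the finest existing $\obs_\star$, and leaves implicit the reordering of the outer quantifier block that you carry out explicitly.
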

\begin{proof}
  The main idea is to use the fact that in a one-player game of
  partial-observation (such a game occurs when all but one player have
  fixed their strategies, as in the definition of Nash equilibrium),
  the player has a strategy enforcing some goal iff the player has a
  uniform strategy enforcing that goal. Here are the details.  Let
  $\CGSi=(\Act,\setpos,\trans,\val,\pos_\init,\obsint)$ be a \CGSi
  that yields hierarchical observation. Suppose the observation set is
  $\Obsf$.  To decide if $\CGSi \models \Phi_{\textsc{NE}}$ first
  build a new \CGSi
  $\CGSi'=(\Act,\setpos,\trans,\val,\pos_\init,\obsint')$ over
  observations $\Obsf' \egdef \Obsf \cup \{\obs_{p}\}$ such that
  $\obsint'(\obs) = \obsint(\obs)$ and $\obsint'(\obs_{p}) = \{(v,v) : v \in
  V\}$, and consider the sentence
\begin{align*} 
\Phi' \egdef & \Estrato[x_1]{\obs_1} \dots  \Estrato[x_n]{\obs_{n}}  (a_1,x_1) \dots (a_n,x_n) \Psi'
\end{align*}
where 
$\Psi' \egdef \bigwedge_{i \in [n]} \left[
 ( \Estrato[y_i]{\obs_{p}} (a_i,y_i) goal_i ) \to goal_i \right]$.

Then $(\Phi',\CGSi')$ is a hierarchical instance, and by Theorem~\ref{theo-SLi} we can decide $\CGSi' \models \Phi'$. We claim that 
$\CGSi' \models \Phi'$ iff $\CGSi \models \Phi_{\textsc{NE}}$. To see this, it is enough to establish that: 
\[ \CGSi',\assign,\pos_{\init} \models \Estrato[y_i]{\obs_{p}} (a_i,y_i) goal_i \equivaut \Estrato[y_i]{\obs_i} (a_i,y_i) goal_i,
\]
for every $i \in [n]$ and every assignment $\assign$ such that $\assign(x_i) = \assign(a_i)$ is an $\obs_i$-uniform strategy.

To this end, fix $i$ and $\assign$. The right-to-left implication is
immediate (since $\obs_{p}$ is finer than $\obs_i$).  For the converse, let
$\sigma$ be a $p$-uniform strategy (\ie, perfect-information) such
that $\CGSi',\assign[y_i \mapsto \sigma,a_i \mapsto \sigma],\pos_{\init} \models
goal_i$.
Let $\pi \egdef \out(\assign[y_i \mapsto \sigma,a_i \mapsto
\sigma],\pos_\init)$. Construct an $\obs_i$-uniform strategy $\sigma'$
that agrees with $\sigma$ on prefixes of $\pi$. This can be done as
follows: if $\rho \obseq[\obs_{i}] \pi_{\leq j}$ for some $j$ then define
$\sigma'(\rho) = \sigma(\pi_{\leq j})$ (note that this is well-defined
since if there is some such $j$ then it is unique), and otherwise
define $\sigma'(\rho) = a$ for some fixed action $a \in \Act$.
\end{proof}

\subsection{Comparison with other logics}
\label{subsec:SLi-comparison}

The main difference between \SL and \ATL-like strategic logics is that
in the latter a strategy is always bound to some player, while in the
former bindings and quantifications are separated. This separation
adds expressive power, e.g., one can bind the same strategy to
different players. Extending \ATL with imperfect-information is done
by giving each player an indistinguishability relation that its
strategies must respect~\cite{BJ14}.  Our extension of \SL by
imperfect information, instead, assigns each strategy $x$ an
indistinguishability relation $o$ when it is quantified
$\Estrato{\obs}$. Thus $\Estrato{\obs}\phi$ means ``there exists a
strategy with observation $\obs$ such that $\phi$ holds''. Associating
observations in this way, \ie, to strategies rather than players has
two consequences. First, it is a clean generalisation of \SL in the
perfect information setting~\cite{DBLP:journals/tocl/MogaveroMPV14}.
Define the \emph{perfect-information fragment of \SLi} to be the logic
\SLi  assuming that $\Obsf = \{\obs\}$ and 
$\obsint(\obs) = \{(v,v) : v \in \CGSi\}$ for every \CGSi $\CGSi$; also let us
assimilate such structures with classic perfect-information concurrent
game structures (\CGS), which are the models of \SL. Finally, let
$\trSLSLi:\SL\to\SLi$ be the trivial translation that annotates
each strategy quantifier $\Estrato{}$ with observation $\obs$. The next proposition says
that the perfect-information fragment of \SLi is a notational variant
of \SL. 


\begin{proposition}\label{prop-perfectinformation}
  For every \SL sentence $\varphi$ and every \CGS $\CGS$, it holds
  that $\CGS
  \models \varphi$ iff $\CGSi \models \trSLSLi(\varphi)$.
\end{proposition}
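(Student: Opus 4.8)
The plan is to prove, by structural induction on the formula, a statement slightly stronger than the proposition, from which the proposition follows by instantiating $\fplay = \pos_\init$: namely that for every subformula $\psi$ of $\varphi$, every assignment $\assign$ and every finite play $\fplay$, we have $\CGS,\assign,\fplay \models \psi$ if and only if $\CGSi,\assign,\fplay \modelsSL \trSLSLi(\psi)$, where $\CGSi$ denotes the perfect-information \CGSi obtained from $\CGS$ by adding the single observation $\obs$ with $\obsint(\obs) = \{(v,v) : v \in \CGS\}$. I would first note that $\CGS$ and $\CGSi$ share the same actions, positions, transition function, labelling and initial position, so the notions of finite play, assignment and outcome $\out(\assign,\fplay)$ coincide for the two structures; likewise $\trSLSLi$ leaves every construct untouched apart from annotating each strategy quantifier with $\obs$. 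Hence the two satisfaction relations can differ only in their treatment of the strategy-quantifier case.

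The single observation driving the whole argument is that $\setstrato = \setstrat$ in $\CGSi$. Indeed, since $\obsint(\obs)$ is the identity relation on positions, its synchronous-perfect-recall extension to finite plays satisfies $\fplay \obseq \fplay'$ iff $|\fplay| = |\fplay'|$ and $\fplay_i = \fplay'_i$ for all $i$, i.e. iff $\fplay = \fplay'$. Consequently the defining constraint of an $\obs$-strategy, namely $\strat(\fplay) = \strat(\fplay')$ whenever $\fplay \obseq \fplay'$, reduces to the tautology $\strat(\fplay) = \strat(\fplay)$ and is thus vacuously satisfied by every strategy. Therefore every strategy is an $\obs$-strategy and $\setstrato = \setstrat$.

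With this in hand the induction is routine. The base case $\psi = p$ and the boolean, temporal and binding cases are immediate: the translation does not modify these constructs, the corresponding clauses in Definition~\ref{def-SL-semantics} are identical for the two logics, and the outcome used in the temporal clauses is computed identically in $\CGS$ and $\CGSi$, so each case follows directly from the induction hypothesis applied to the immediate subformulas. The only genuinely interesting case is $\psi = \Estrat[x] \psi'$, whose translation is $\trSLSLi(\psi) = \Estrato{\obs}\trSLSLi(\psi')$: the \SL clause asks for a strategy $\strat \in \setstrat$ with $\CGS,\assign[x\mapsto\strat],\fplay \models \psi'$, whereas the \SLi clause asks for an $\obs$-strategy $\strat \in \setstrato$ with $\CGSi,\assign[x\mapsto\strat],\fplay \modelsSL \trSLSLi(\psi')$. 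Since $\setstrato = \setstrat$, these two quantifications range over exactly the same set of strategies, and for each such $\strat$ the induction hypothesis yields the equivalence of the two inner satisfaction statements; the two clauses are therefore equivalent.

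I do not expect any real obstacle here: the entire content of the proposition is captured by the equality $\setstrato = \setstrat$, and every remaining case is a verbatim match between the two semantics. The only point demanding a little care is to formulate the induction hypothesis over \emph{all} subformulas and \emph{all} assignments (rather than only over sentences), so that it remains available in the quantifier and binding cases, where the assignment is updated before the recursive call.
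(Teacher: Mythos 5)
Your proof is correct and matches the argument the paper intends: the paper states this proposition without an explicit proof, presenting it as the observation that the perfect-information fragment is a ``notational variant'' of \SL, and your structural induction with the key fact that $\setstrato = \setstrat$ when $\obsint(\obs)$ is the identity is precisely the justification being taken for granted. Your care in strengthening the induction hypothesis to all assignments and finite plays is appropriate given the paper's semantics.
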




Second, \SLi subsumes imperfect-information extensions of \ATLs that
associate observations to players.

We recall that an \ATLsi formula\footnote{See~\cite{BJ14} for the definition of \ATLsi, where subscript i
refers to ``imperfect information'' and subscript R to ``perfect
recall''. Also, we consider the so-called \emph{objective
    semantics} for \ATLsi.} $\EstratATL \psi$  reads as ``there are
strategies for players in $\coal$ such that $\psi$ holds whatever
players in $\Agf\setminus\coal$ do''.
Consider the translation $\trATLSL:\ATLsi\to\SLi$ that replaces each subformula of
the form $\EstratATL \psi$, where
$\coal=\{\ag_{1},\ldots,\ag_{k}\}\subset\Agf$ is a coalition of
players and $\Agf\setminus\coal = \{\ag_{k+1},\ldots,\ag_{n}\}$, with formula $\Estrato[\var_{1}]{\obs_{1}}\ldots
\Estrato[\var_{k}]{\obs_{k}} \Astrato[\var_{k+1}]{\obs_{p}}\ldots \Astrato[\var_{n}]{\obs_{p}}
\bind[\ag_{1}]{\var_{1}}\ldots \bind[\ag_{n}]{\var_{n}} \psi'$, where
$\psi'=\trATLSL(\psi)$. Also, for every \CGSi as considered in the
semantics of \ATLi, \ie, where each agent is assigned an equivalence
relation on positions (let us refer to such structures as \CGSiATL),
define the \CGSi  $\CGSi'$  by interpreting each $\obs_{i}$ as the equivalence relation
for agent~$\ag_i$ in $\CGSi$, and interpreting $\obs_{p}$ as the identity
relation.

\begin{proposition}
  \label{prop-subsume-ATL}
  For every \ATLsi formula $\phi$ and \CGSiATL $\CGSi$,
  it holds that $\CGSi\models \phi$ iff $\CGSi'\models \trATLSL(\phi)$.
\end{proposition}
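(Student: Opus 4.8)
The plan is to prove the equivalence by a simultaneous structural induction on the \ATLsi formula, treating state formulas and path formulas separately. For a state formula $\phi$ I would show that $\CGSi,\rho\models\phi$ iff $\CGSi',\assign,\rho\modelsSL\trATLSL(\phi)$ for every history $\rho$ and every assignment $\assign$; here quantifying over $\assign$ is harmless because the translation of a state formula rebinds \emph{all} agents $(a_1,x_1)\cdots(a_n,x_n)$ immediately after the quantifier block, so no temporal operator of $\trATLSL(\phi)$ lies outside the scope of a fresh binding and the incoming assignment is irrelevant --- this matches the fact that \ATLsi satisfaction carries no assignment. For a path formula $\psi$ the induction hypothesis instead relates the \ATLsi path semantics along an infinite play $\pi$ at position $i$ with $\CGSi',\assign,\rho\modelsSL\trATLSL(\psi)$ whenever $\out(\assign,\rho)=\pi$ and $\rho=\pi_{\leq i}$; aligning the temporal indices of the two semantics (recall that $\pi_n$ is the $(n{+}1)$-st position) is routine bookkeeping.

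The heart of the argument is the strategic case $\phi=\EstratATL[\coal]\psi$ with $\coal=\{a_1,\dots,a_k\}$ and $\Agf\setminus\coal=\{a_{k+1},\dots,a_n\}$. The key observation is that, in $\CGSi'$, an $\obs_i$-strategy is exactly a strategy uniform for agent $a_i$'s indistinguishability relation (by construction of $\obsint'$), while an $\obs_p$-strategy is an arbitrary, perfect-information strategy since $\obsint'(\obs_p)$ is the identity. I would then use the following outcome-set lemma: for fixed coalition strategies $\sigma_1,\dots,\sigma_k$, the set of plays $\out(\assign',\rho)$ obtained as the opponent strategies range over all $\obs_p$-strategies equals the set of plays starting at $\rho$ consistent with $\sigma_1,\dots,\sigma_k$, which is precisely the set quantified over by the \emph{objective} semantics of $\EstratATL[\coal]$. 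The left-to-right inclusion is immediate, and the reverse holds because any single play consistent with the coalition strategies is realised by the perfect-information opponent profile that simply plays along it. Given this lemma, the forward direction takes the uniform witnesses $\sigma_1,\dots,\sigma_k$ of $\CGSi,\rho\models\EstratATL[\coal]\psi$, assigns them to $x_1,\dots,x_k$, and for every choice of $\obs_p$-strategies for the opponents invokes the path-formula induction hypothesis on the resulting outcome; the backward direction is symmetric, extracting uniform coalition strategies and checking each consistent play via a play-along opponent profile.

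The remaining cases are routine: atomic propositions reduce to the labelling $\val$, Boolean connectives pass through the translation verbatim, and the temporal operators $\X$ and $\until$ follow from the path-formula induction hypothesis once the outcome $\out(\assign,\rho)$ is identified with the play $\pi$ carried by the hypothesis. The main obstacle I anticipate is not any single case but pinning down precisely the \emph{objective} semantics of \ATLsi and proving the outcome-set lemma, \ie justifying that universally quantifying perfect-information ($\obs_p$-)opponent strategies captures exactly ``for all plays consistent with the coalition'' --- the point where associating observations to \emph{strategies} rather than players, together with the identity interpretation of $\obs_p$, does the real work. A secondary technical nuisance is the careful matching of perfect-recall histories and temporal indices across the two semantics, but this is purely mechanical.
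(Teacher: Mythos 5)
The paper states Proposition~\ref{prop-subsume-ATL} without proof, so there is no official argument to diverge from; your proof is the natural one and it is correct. The mutual induction on state and path formulas, the observation that the translation of a state formula binds every agent before any temporal operator (so the incoming assignment is irrelevant, matching the assignment-free \ATLsi satisfaction relation), and the outcome-set lemma identifying universal quantification over $\obs_{p}$-strategies (identity relation, hence arbitrary strategies) with the objective-semantics quantification over all plays consistent with the coalition's uniform strategies are exactly the ingredients the proposition rests on.
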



Third, \SLi also subsumes the imperfect-information extension of \ATLs
with strategy context
(see~\cite{DBLP:journals/corr/LaroussinieMS15} for the definition of
\ATLssc with partial observation, which we refer to as \ATLssci). 
The only difference between \ATLssci and \ATLsi is the following: in
\ATLsi, when a subformula of the form $\EstratATL \psi$ is met, we
quantify existentially on strategies for players in $\coal$, and
then we consider all possible outcomes obtained by letting other
players behave however they want. Therefore, if any player in
$\Agf\setminus\coal$ had previously been assigned a strategy, it is
forgotten. In \ATLssci on the other hand, these strategies are stored
in a \emph{strategy context}, which is a \emph{partial} assignment $\assign$, defined for the
subset of players currently bound to a strategy.
A strategy context allows one to quantify universally only on strategies
 of players who are not in $\coal$ and who are not already
bound to a strategy. It is then easy to define a translation
$\trATLscSL:\ATLssci\to\SLi$  by adapting translation $\trATLSL$
from Proposition~\ref{prop-subsume-ATL}, with the
 strategy context as parameter. For an \CGSiATL $\CGSi$, the \CGSi $\CGSi'$ is defined as for Proposition~\ref{prop-subsume-ATL}.

\begin{proposition}
  \label{prop-subsume-ATLsc}
  For every \ATLssci formula $\phi$ and \CGSiATL $\CGSi$,
    it holds that $\CGSi\models \phi$ iff $\CGSi'\models \trATLscSL(\phi)$.
\end{proposition}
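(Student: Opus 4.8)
The plan is to mimic the proof of Proposition~\ref{prop-subsume-ATL}, proceeding by structural induction on $\phi$ while carrying the strategy context along the induction. The only genuine difference with the context-free case is bookkeeping: the translation $\trATLscSL$ is defined relative to the set of players currently bound to a strategy, and this parameter has to be kept in correspondence with an \SLi assignment throughout.

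First I would make the parameterisation of $\trATLscSL$ explicit. A context is a partial assignment $C$ of players to strategies together with, for each $\ag\in\dom(C)$, the variable $x_\ag$ currently holding $\ag$'s strategy. On a subformula $\EstratATL[\coal]\psi$ with $\coal=\{\ag_1,\dots,\ag_k\}$, the translation relative to $C$ introduces an existential quantifier $\Estrato[x_{\ag_i}]{\obs_i}$ for each $\ag_i\in\coal$, a universal quantifier $\Astrato[y_\ag]{\obs_p}$ for each $\ag\in\Agf\setminus(\coal\cup\dom(C))$ \emph{only}, binds each of these players to its new variable (leaving the players of $\dom(C)\setminus\coal$ bound as before), and then recurses into $\psi$ relative to the updated context. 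The essential point is that the universal block ranges over exactly the unbound non-coalition players, which is precisely the set over which \ATLssci quantifies universally in context $C$.

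Next I would strengthen the statement to an inductive invariant relating contexts to assignments. Say that an \SLi assignment $\assign$ \emph{agrees} with a context $C$ if $\assign(\ag)=\assign(x_\ag)=C(\ag)$ for every $\ag\in\dom(C)$. The invariant reads: for every subformula $\psi$ with $\free(\psi)\subseteq\dom(C)$, every $\assign$ agreeing with $C$, and every finite play $\fplay$, one has $\CGSi,C,\fplay\models\psi$ iff $\CGSi',\assign,\fplay\modelsSL\trATLscSL(\psi)$, the latter translation being taken relative to $C$. The atomic and boolean cases are immediate from the induction hypothesis. For the temporal cases, $\free(\psi)\subseteq\dom(C)$ forces $C$ to assign a strategy to every player, and since $\assign$ agrees with $C$ on players, the outcome $\out(\assign,\fplay)$ in $\CGSi'$ coincides with the play determined by $C$ in $\CGSi$; the conclusion follows by applying the induction hypothesis along this common play.

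The only case that uses the definition of $\CGSi'$ is $\EstratATL[\coal]\psi$, and this is where I expect the real (though still routine) work to lie. I would unfold the \SLi semantics of the quantifier-and-binding block produced by $\trATLscSL$ and match it quantifier by quantifier to the \ATLssci semantics. Two facts close the case: by construction of $\CGSi'$, the $\obs_i$-strategies are exactly the strategies respecting agent $\ag_i$'s observation in $\CGSi$, so the existential quantifiers coincide with the coalition choices; and $\obsint'(\obs_p)$ is the identity, so the $\obs_p$-strategies are \emph{all} strategies, matching the universal quantification over arbitrary behaviours of the environment players. The bindings then yield an assignment agreeing with the updated context, so the induction hypothesis applies and the equivalence propagates. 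The main obstacle is purely one of precision: checking that the universal block introduced relative to $C$ has domain exactly $\Agf\setminus(\coal\cup\dom(C))$, and that rebinding a coalition player already present in $\dom(C)$ overwrites its strategy consistently on the \ATLssci and \SLi sides. Finally, instantiating the invariant at the empty context, the initial play $\pos_\init$, and any agreeing assignment yields the proposition.
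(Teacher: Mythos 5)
The paper gives no explicit proof of this proposition: it only remarks that $\trATLscSL$ is obtained by adapting $\trATLSL$ with the strategy context as a parameter, which is precisely the induction-with-context-invariant argument you spell out. Your proposal is correct and is exactly the intended elaboration of the paper's (omitted) proof, including the two points that actually matter --- the universal block ranging over $\Agf\setminus(\coal\cup\dom(C))$ only, and $\obsint'(\obs_{p})$ being the identity so that $\obs_{p}$-strategies are all strategies.
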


Fourth, there is a natural and simple translation of instances of the
model-checking problem of \CL\cite{DBLP:conf/csl/FinkbeinerS10} into
the hierarchical instances of \SLi.  Moreover, the image of this
translation consists of instances of \SLi with a very restricted form,
\ie, atoms mentioned in the \SLi-formula are \emph{observable} for all
observations of the \CGSi, \ie,
players know the truth value of all atoms in all positions, for
any observation they are assigned. 

\begin{proposition} \label{prop:CLtoSLi}
There is an effective translation that, given a \CL-instance $(\CKS,\phi)$ produces a hierarchical \SLi-instance $(\CGSi,\Phi)$ such that
\begin{enumerate}
 \item $\CKS \models \phi$ iff $\CGSi \models \Phi$,
 \item For all atoms $p$  in $\Phi$,  observations $\obs \in
   \Obsf$ and positions
   $\pos,\pos'\in\CGSi$, if $\pos \obseq \pos'$ then $p \in \val(\pos) \iff p \in \val(\pos')$. 
\end{enumerate}
\end{proposition}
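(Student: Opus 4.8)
The plan is to define the model translation and the formula translation separately, prove the semantic equivalence~(1) by structural induction on $\phi$, and then read off the hierarchy condition and requirement~(2) from the construction. The essential feature of \CL that I would exploit is that each of its quantifiers ranges over \emph{coordination strategies} that may depend only on a designated set $V$ of observable signals, and that the sets of observable signals attached to nested quantifiers are linearly ordered by inclusion; it is exactly this linear ordering that will deliver the hierarchy of the resulting \SLi-instance.

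For the model, I would build from \CKS a \CGSi \CGSi whose positions encode a state of \CKS together with the current values of all signals, whose transition function is driven by the joint actions realising the coordination outputs, and whose atomic propositions $\APf$ record the globally visible labels of \CKS (the information every coordinator sees). For each signal set $V$ occurring as an observation in $\phi$ I introduce an observation symbol $\obs_V \in \Obsf$ and interpret $\obsint(\obs_V)$ as the equivalence relation that identifies two positions iff they agree both on the signals in $V$ and on all atoms of $\APf$. Including the atoms in every such relation by construction makes every atom observable under every observation, which is precisely requirement~(2). Observe also that $V_2 \subseteq V_1$ implies $\obsint(\obs_{V_1}) \subseteq \obsint(\obs_{V_2})$, i.e.\ observing more signals yields a finer observation.

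For the formula, the translation $t$ maps boolean and temporal operators homomorphically, maps each atom to the corresponding proposition of $\APf$, and replaces an existential (resp.\ universal) \CL-quantification over a strategy observing $V$, with body $\psi$, by $\Estrato[x]{\obs_V}(\ag_V,x)\,t(\psi)$ (resp.\ $\Astrato[x]{\obs_V}(\ag_V,x)\,t(\psi)$), binding the quantified strategy to a dedicated agent $\ag_V$ whose action determines the corresponding signal output; this translation is clearly effective, and we set $\Phi \egdef t(\phi)$. The equivalence $\CKS \models \phi$ iff $\CGSi \models \Phi$ is then proved by induction on $\phi$. The crux is a satisfaction-preserving correspondence between \CL-assignments and \SLi-assignments: under synchronous perfect recall an $\obs_V$-strategy is precisely a function of the history of the signals in $V$ (together with the always-visible atoms), which is exactly \CL's notion of a strategy observing $V$ over time; one checks that the induced outcome plays coincide and pushes the induction through the quantifier, boolean and temporal cases.

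Finally, for hierarchy, suppose $\phi_1 = \Estrato[y]{\obs_{V_1}}\phi_1'$ is a subformula of $\phi_2'$ where $\phi_2 = \Estrato{\obs_{V_2}}\phi_2'$ (universal quantifiers being covered since $\Astrato[x]{\obs}$ desugars to an $\Estrato$). In \CL the inner quantifier observes at least as much as the outer one, i.e.\ $V_2 \subseteq V_1$; by the refinement property noted above this gives $\obsint(\obs_{V_1}) \subseteq \obsint(\obs_{V_2})$, so $(\Phi,\CGSi)$ is hierarchical. I expect the main obstacle to be the faithful simulation of coordination strategies by $\obs_V$-strategies --- matching \CL's signal-history dependence with synchronous-perfect-recall indistinguishability and ensuring the outcome-play correspondence is exact --- together with checking that the direction of signal-set inclusion lines up with the finer-than relation so that the \CL linear order indeed yields a hierarchical \SLi-instance.
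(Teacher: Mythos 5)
Your proposal is correct in outline, but it takes a genuinely different route from the paper. The paper factors the translation through its intermediate logic \QCTLsi: it first reflects the semantics of \CL into (hierarchical) \QCTLsi --- a near-trivial step, since \QCTLsi's compound Kripke structures, tree semantics and observation-annotated propositional quantifiers were designed to mirror exactly the kind of information-constrained quantification that \CL performs --- and then reuses an adaptation of the known translation of \QCTLs into \ATLssc to get from \QCTLsi into \SLi. You instead translate \CL into \SLi directly: you build the \CGSi from the \CL model in one step, introduce a dedicated agent per quantified coordination variable, and map each \CL quantifier over strategies observing $V$ to $\exists x^{\obs_V}(\ag_V,x)$. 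The two decompositions buy different things. The paper's factoring isolates the only delicate part (turning propositional labellings chosen uniformly w.r.t.\ an observation into bound strategies) in a single step that is already understood from the perfect-information setting, and it makes requirement (2) and the hierarchy condition fall out of the \QCTLsi intermediate form; your direct construction avoids the detour but must re-establish from scratch the correspondence between \CL's signal-history-dependent coordination strategies and synchronous-perfect-recall $\obs_V$-strategies, including the outcome-play correspondence and the usual timing/causality bookkeeping of distributed-synthesis encodings --- precisely the obstacle you flag. Your handling of the two side conditions is sound: baking the atoms of $\APf$ into every $\obsint(\obs_V)$ gives requirement (2) by construction (and faithfully reflects \CL's restriction that all facts about the model are visible to every strategy), and the direction of your inclusion check ($V_2 \subseteq V_1$ for the inner quantifier yielding $\obsint(\obs_{V_1}) \subseteq \obsint(\obs_{V_2})$) matches Definition~\ref{def-hierarchical-formula}. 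Given that the paper itself only sketches this proposition, your sketch is at a comparable level of rigour.
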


To do this, one first translates \CL into (hierarchical) \QCTLsi, the
latter is defined in the next section. This step is a simple
reflection of the semantics of \CL in that of \QCTLsi. Then one
translates \QCTLsi into \SLi by a simple adaptation of the translation
of \QCTLs into
\ATLssc~\cite{DBLP:journals/iandc/LaroussinieM15}. 

\section{\QCTLs with imperfect information}
\label{sec-QCTL-imp-inf}

In this section we introduce an imperfect-information extension of
\QCTLs~\cite{Sis83,emerson1984deciding,Kup95,KMTV00,french2001decidability,
DBLP:journals/corr/LaroussinieM14}. In order to introduce imperfect
information, instead of considering equivalence relations between
states as in concurrent game structures, we
will enrich Kripke structures by giving internal structure to their states,
\ie, we see states as $n$-tuples of local states. This way of
modelling imperfect information is inspired from
  Reif's multi-player
  game structures~\cite{peterson2001lower} and distributed
  systems~\cite{halpern1989complexity}, and we find it very suitable to application of
  automata techniques, as discussed in
  Section~\ref{sec-discussion-QCTL}.

  The syntax of \QCTLsi is 
similar to that of \QCTLs, except that we annotate second-order quantifiers by
subsets $\cobs \subseteq [n]$. The idea is that quantifiers annotated by $\cobs$ 
can only ``observe'' the local states indexed by $i \in \cobs$. 
We define the tree-semantics of \QCTLsi: this means that we
interpret formulas on trees that are the unfoldings of Kripke structures (this
will capture the fact that players in \SLi have synchronous perfect recall).

We then define the syntactic class of \emph{hierarchical formulas} and prove,
using an automata-theoretic approach, that  model checking this
class of formulas is decidable.

\subsection{\QCTLsi Syntax}
\label{sec-syntax-QCTLi}

The syntax of \QCTLsi is very similar to that of \QCTLs: the only difference is
that we annotate quantifiers by a set of indices that defines the
``observation'' of that quantifier.

\begin{definition}[\QCTLsi Syntax]
  \label{def-syntax-QCTLsi}
  Fix 
  $n \in \setn$.
  The syntax of \QCTLsi is defined by the following grammar:
  \begin{align*}
  \phi\egdef &\; p \mid \neg \phi \mid \phi\ou \phi \mid \E \psi \mid
  \existsp[p]{\cobs} \phi\\
    \psi\egdef &\; \phi \mid \neg \psi \mid \psi\ou \psi \mid \X \psi \mid
  \psi \until \psi
\end{align*}
where $p\in\APf$ 
and $\cobs\subseteq [n]$.
\end{definition}

Formulas of type $\phi$ are called \emph{state formulas}, those of type $\psi$
are called \emph{path formulas}, and \QCTLsi consists of all the state formulas
defined by the grammar.   We use standard abbreviation 
$\A\psi \egdef \neg\E\neg\psi$.
We also use $\exists p.\,\phi$ as a shorthand for $\existsp[p]{[n]}
\phi$, 
and we let $\forall p.\, \phi\egdef \neg \exists p.\, \neg \phi$.

Given a \QCTLsi formula $\phi$, we define the set of \emph{quantified
  propositions} $\APq(\phi)\subseteq\APf$ as the set of atomic propositions $p$ such that
$\phi$ has a subformula of the form $\existsp[p]{\cobs}\phi$. We also
define the set of \emph{free propositions} $\APfree(\phi)\subseteq\APf$ as the set
of atomic propositions that appear out of the scope of any quantifier
of the form $\existsp[p]{\cobs}$ Observe that 
$\APq(\phi)\inter\APfree(\phi)$ may not be empty in general, \ie, a proposition may appear both free and quantified in (different
places of) a formula. 

\subsection{\QCTLsi tree-semantics}
\label{sec-QCTLsi-semantics}
We define the semantics on structures whose states are tuples of local states.

\head{Local states}
Let $\{\setlstates_{i}\}_{i\in [n]}$ denote $n \in \setn$ disjoint finite sets of
\emph{local states}. For  $I\subseteq [n]$, we let $\Dirtreei\egdef\prod_{i\in
I}\setlstates_{i}$ if $I\neq\emptyset$, and
$\Dirtreei[\emptyset]\egdef\{\blank\}$ where $\blank$ is a special symbol.

\head{Concrete observations}
A set $\cobs \subseteq [n]$ is called a \emph{concrete observation} (to distinguish
it from observations $\obs$ in the definitions of \SLi).

\head{Compound Kripke structures} 
These are like Kripke structures
except that the states are elements of $\Dirtreei[{[n]}]$.
A \emph{compound Kripke structure}, or \CKS, over $\APf$, is a tuple 
$\CKS=(\setstates,\relation,\state_\init,\lab)$ where
$\setstates\subseteq \Dirtreei[{[n]}]$ is a set of \emph{states}, 
$\relation\subseteq\setstates\times\setstates$ is a
left-total\footnote{\ie, for all $\state\in\setstates$, there exists $\state'$
such that $(\state,\state')\in\relation$.} \emph{transition relation}, 
$\state_\init \in \setstates$ is an \emph{initial state}, and
$\lab:\setstates\to 2^{\APf}$ is a \emph{\labeling function}.

A \emph{path} in $\CKS$  is an infinite sequence of states
$\spath=\state_{0}\state_{1}\ldots$ such that
 for all $i\in\setn$,
$(\state_{i},\state_{i+1})\in \relation$. For 
$\state\in\setstates$, we let $\Paths(\state)$ be the set of all
paths that start in $\state$.  
A \emph{finite path} is a finite non-empty prefix of a path.
We may write $\state\in\CKS$ for $\state\in\setstates$.
Since we will interpret \QCTLsi on unfoldings of \CKS, we now define
infinite trees.

\head{Trees}
In many works, trees are defined as prefix-closed sets of words 
with the empty word $\epsilon$ as root. Here trees represent unfoldings of 
Kripke structures, and we find it more convenient to see a node $u$ as a sequence of
states and the root as the initial state. 
Let $\Dirtree$ be a finite set (typically a set of states). 
An \emph{$\Dirtree$-tree} $\tree$ 
 is a
nonempty set of words $\tree\subseteq \Dirtree^+$ such that:
\begin{itemize}
  \item\label{p-root} there exists $\racine\in\Dirtree$,  called the
    \emph{root} of $\tree$, such that each
    $\noeud\in\tree$ starts with $\racine$ ($\racine\pref\noeud$);
  \item if $\noeud\cdot\dir\in\tree$ and $\noeud\neq\epsilon$, then
    $\noeud\in\tree$, and
  \item if $\noeud\in\tree$ then there exists $\dir\in\Dirtree$ such that $\noeud\cdot\dir\in\tree$.
\end{itemize}

The elements of a tree $\tree$ are called \emph{nodes}.  
  If 
 $\noeud\cdot\dir \in \tree$, we say that $\noeud\cdot\dir$ is a \emph{child} of
 $\noeud$.
 The \emph{depth} of a node $\noeud$ is $|\noeud|$.
An $\Dirtree$-tree $\tree$ is \emph{complete} if $\noeud \in \tree, \dir \in \Dirtree$ implies $\noeud \cdot \dir \in \tree$.
A \emph{\tpath} in $\tree$ is an infinite sequence of nodes $\tpath=\noeud_0\noeud_1\ldots$
such that for all $i\in\setn$, $\noeud_{i+1}$ is a child of
$\noeud_i$,
and $\tPaths(\noeud)$ is the set of \tpaths
 that start in node $\noeud$. 
An \emph{$\APf$-\labeled $\Dirtree$-tree}, or
\emph{$(\APf,\Dirtree)$-tree} for short, is a pair
$\ltree=(\tree,\lab)$, where $\tree$ is an $\Dirtree$-tree called the
\emph{domain} of $\ltree$ and
$\lab:\tree \rightarrow 2^{\APf}$ is a \emph{\labeling}.
For a labelled tree $\ltree=(\tree,\lab)$ and an atomic
proposition $p\in\APf$, we define the \emph{$p$-projection of $\ltree$}
as the labelled tree
$\proj{\ltree}\;\egdef(\tree,\proj{\lab})$, where for each
$\noeud\in\tree$, $\proj{\lab}\!(\noeud)\egdef \lab(\noeud)\setminus
\{p\}$. For a set of trees $\lang$, we let $\proj{\lang}\;\egdef\{\proj{\ltree}\;\mid\ltree\in\lang\}$.
Finally, two labelled trees $\ltree=(\tree,\lab)$ and
  $\ltree'=(\tree',\lab')$ are \emph{equivalent modulo $p$},
  written $\ltree\Pequiv\ltree'$, if
  $\proj{\ltree}=\proj{\ltree'}$ (in particular, $\tree=\tree'$).
  
%

%
\head{Narrowing}
Let $\Dirtree$ and $\Dirtreea$ be two finite sets, and let $(\dir,\dira)\in\Dirtree\times\Dirtreea$.
 The
 \emph{$\Dirtree$-narrowing} of $(\dir,\dira)$ is
${\projI[\Dirtree]{(\dir,\dira)}}\egdef \dir$.
%
This definition extends naturally to words over
$\Dirtree\times\Dirtreea$ (pointwise), and thus to
$\Dirtree\times\Dirtreea$-trees.

 For $J\subseteq I\subseteq [n]$ and $\dirz=(\lstate_{i})_{i\in
   I}\in\Dirtreei[I]$,
 we also define 
 ${\projI[{J}]{\dirz}}\egdef
 \projI[{\Dirtreei[J]}]{\dirz}$,
 where $\dirz$ is seen as a pair $\dirz=(\dirz_{1},\dirz_{2})\in
 \Dirtreei[J]\times \Dirtreei[{I\setminus J}]$. This is well defined
 because having taken sets $\setlstates_{i}$ to be disjoint, the
 ordering of local states in $\dirz$ is indifferent.
 We also extend this definition to words and trees.
 
Observe that when narrowing a tree,
nodes with same narrowing are merged. In particular, for every
$\Dirtreei[I]$-tree $\tree$, $\projI[\emptyset]{\tree}$ is the only
$\Dirtreei[\emptyset]$-tree, $\blank^{\omega}$.


\head{Quantification and uniformity} 
In \QCTLsi the intuitive meaning of $\existsp[p]{\cobs} \phi$ in a tree $t$ is that there is some equivalent tree $t'$ modulo $p$ such that
$t'$ is $\cobs$-uniform in $p$ and satisfies $\phi$.  Intuitively, a tree is
$\cobs$-uniform in $p$ if it is uniformly labelled by $p$, \ie, 
if every two nodes that are indistinguishable
when projected onto the local states indexed by $\cobs \subseteq [n]$ agree on
their labelling of $p$.

\begin{definition}[$\cobs$-indistinguishability and $\cobs$-uniformity
  in $p$]
  \label{def-uniformity}
Fix $\cobs \subseteq [n]$ and $I \subseteq [n]$.

\begin{itemize}
 \item Two tuples $\dir,\dir'\in\Dirtreei[I]$ are \emph{$\cobs$-indistinguishable},
written $\dir\oequiv\dir'$, if 
$\projI[{I\cap\,\cobs}]{\dir}=\projI[{I\cap\,\cobs}]{\dir'}$.

\item Two words
   $\noeud=\noeud_{0}\ldots\noeud_{i}$ and
   $\noeud'=\noeud'_{0}\ldots\noeud'_{j}$ over alphabet $\Dirtreei[I]$ are
   \emph{$\cobs$-indistinguishable}, written $\noeud\oequivt\noeud'$, if
   $i=j$ and for all $k\in \{0,\ldots,i\}$ we have
   $\noeud_{k}\oequiv\noeud'_{k}$.

\item      A tree $\ltree$ is \emph{$\cobs$-uniform in $p$} if for every pair of nodes
 $\noeud,\noeud'\in\tree$ such that $\noeud\oequivt\noeud'$, we have
 $p\in\lab(\noeud)$ iff $p\in\lab(\noeud')$.

\end{itemize}
\end{definition}

Finally, we inductively define the satisfaction relation $\modelst$ for the semantics on trees, where   $\ltree=(\tree,\lab)$ is
a $2^{\APf}$-\labeled $\Dirtreei$-tree, 
$\noeud$ is a node and $\tpath$ is a path in $\tree$:
\begin{align*}
  \ltree,\noeud\modelst & 	\,p 			& \mbox{ if } &\quad p\in\lab(\noeud)\\
  \ltree,\noeud\modelst & 	\,\neg \phi		& \mbox{ if } & \quad\ltree,\noeud\not\modelst \phi\\
  \ltree,\noeud\modelst & 	\,\phi \ou \phi'		& \mbox{ if } &\quad \ltree,\noeud \modelst \phi \mbox{ or    }\ltree,\noeud\modelst \phi' \\
  \ltree,\noeud\modelst & 	\,\E\psi			& \mbox{ if } &\quad \exists\,\tpath\in\tPaths(\noeud) \mbox{      s.t. }\ltree,\tpath\modelst \psi \\
  \ltree,\noeud\modelst & \,\existsp{\cobs} \phi & \mbox{ if }
  & \quad \exists\,\ltree'\Pequiv[p]\ltree \mbox{ s.t.  }\ltree'\mbox{ is $\cobs$-uniform in $p$ and }\\ & &  & \quad\ltree',\noeud\modelst\phi.\\
\ltree,\tpath\modelst &		\,\phi 			& \mbox{ if } &\quad \ltree,\tpath_{0}\modelst\phi \\ 
\ltree,\tpath\modelst &		\,\neg \psi 		& \mbox{ if }
& \quad \ltree,\tpath\not\modelst \psi \\ 
\ltree,\tpath\modelst & \,\psi \ou \psi'			& \mbox{ if } & \quad\ltree,\tpath \modelst \psi \mbox{ or }\ltree,\tpath\modelst \psi' \\ 
\ltree,\tpath\modelst & \,\X\psi 				& \mbox{ if } & \quad\ltree,\tpath_{\geq 1}\modelst \psi \\ 
\ltree,\tpath\modelst & \,\psi\until\psi' 		& \mbox{ if } & \quad\exists\, i\geq 0 \mbox{ s.t.    }\ltree,\tpath_{\geq i}\modelst\psi' \text{ and} \\ 
&&& \quad\forall j \text{ s.t. }0\leq j <i,\; \ltree,\tpath_{\geq j}\modelst\psi
\end{align*}


We write $\ltree\modelst\phi$ for $\ltree,\racine\modelst\phi$,
where $\racine$ is the root of $\ltree$.

\begin{example}
  \label{example-QCTLi}
  Consider the following \CTL formula:
\[\ligne{p}\egdef  \A \F p \wedge \A \G (p
\rightarrow \A\X \A\G \neg p).\]

This formula  holds in a labelled tree if and only if each path contains
exactly one node labelled with $p$. Now, consider the following
\QCTLsi formula:
\[\ligneb{p}\egdef\existsp{\emptyset}\ligne{p}.\]
For a blind quantifier, two nodes of a tree are
indistinguishable if and only if they have same depth. Therefore,
this formula holds on a tree iff the $p$'s label all and only the nodes at 
some fixed depth. This formula can thus be used to capture the equal
level predicate on trees. Actually, just as \QCTLs captures \MSO, 
one can prove that \QCTLsi with tree semantics subsumes \MSO with
equal level~\cite{elgot-rabin66,lauchli1987monadic,thomas-msoeqlevel}.
In Theorem~\ref{theo-undecidable} we make use of a similar observation to prove that
model-checking \QCTLsi is undecidable.
\end{example}

\head{Model-checking problem for \QCTLsi under tree semantics}
For the model-checking problem, we interpret \QCTLsi on unfoldings of \CKSs.

\head{Tree unfoldings $\unfold{\state}$} 
Let $\CKS=(\setstates,\relation,\state_\init,\lab)$ be a compound Kripke structure over $\APf$, and let $\state\in\setstates$. 
The \emph{tree-unfolding of $\CKS$ from $\state$} is the $(\APf,\setstates)$-tree $\unfold{\state}\egdef (\tree,\lab')$, where
    $\tree$ is the set
    of all finite  paths that start in $\state$, and
    for every $\noeud\in\tree$,
    $\lab'(\noeud)\egdef \lab(\last(\noeud))$.
Given a \CKS $\KS$, a state $\state\in\CKS$  and a
\QCTLsi formula $\phi$, we write $\KS,\state\modelst\phi$ if
$\unfold[\KS]{\state}\modelst\phi$. Write $\KS \modelst \phi$ if $\unfold[\KS]{\state_\init} \models \phi$.

The \emph{model-checking problem for \QCTLsi under tree-semantics} is the 
following decision problem: given an instance $(\phi,\CKS)$ where $\CKS$ is a CKS, 
and $\phi$ is a \QCTLsi formula, return `Yes' if $\CKS \modelst \phi$ and `No' otherwise.



\subsection{Discussion of the definition of \QCTLsi}
 \label{sec-discussion-QCTL}

 We now motivate in detail some aspects of \QCTLsi.
 
\head{Modelling of imperfect information} 
We model imperfect information
 by means of local states (rather than equivalence relations) 
 since this greatly facilitates the use of
 automata techniques. More precisely,
 in our decision procedure of Section~\ref{sec-decidable},
 we make extensive use of an operation on tree automata called
 \emph{narrowing}, which was introduced 
 in~\cite{kupferman1999church} to deal with imperfect-information
 in the context of distributed synthesis for temporal
 specifications. Given an automaton $\auto$ that works on $\Dirtree\times
 \Dirtreea$-trees, where $\Dirtree$ and $\Dirtreea$ are two finite sets, and assuming that we
 want to model an operation performed on trees while observing only
 the $\Dirtree$ component of each node,
this narrowing operation allows one to build from $\auto$ an automaton
$\auto'$ that works
on $\Dirtree$-trees, such that $\auto'$ accepts an $\Dirtree$-tree if
and only if $\auto$ accepts its widening to $\Dirtree\times\Dirtreea$
(see Section~\ref{sec-decidable} for details). One can then make this
automaton $\auto'$ perform the desired
operation, which will by necessity be performed uniformly with regards
to the partial observation, since the $\Dirtreea$ component is absent from the
input trees.

With our definition of compound Kripke structures, their
unfoldings are trees over the  Cartesian product
$\Dirtreei[{[n]}]$. To model a quantification  $\exists^{\cobs}p$ with observation
$\cobs\subseteq [n]$, we can thus use the narrowing operation to
forget about components $\setlstates_{i}$, for $i\in [n]\setminus\cobs$.
We then use the classic projection of nondeterministic tree automata to perform
existential quantification on atomic proposition $p$. Since the choice
of the $p$-labelling is made directly on $\Dirtreei[\cobs]$-trees, it
is necessarily $\cobs$-uniform. 

\head{Choice of the tree semantics}
 \QCTLs is obtained by adding to \CTLs second-order
 quantification on atomic propositions. Several semantics have been
 considered. The two most studied ones are the \emph{structure
   semantics}, in which  formulas are evaluated directly on Kripke
 structures, and the \emph{tree semantics}, in which Kripke structures
 are first unfolded into infinite trees. Tree semantics thus allows
 quantifiers to
 choose the value of a quantified atomic proposition in each
 \emph{finite path} of the model, while in structure semantics the
 choice is only made in each state. 
When \QCTLs is used to express existence of strategies, existential
quantification on atomic propositions labels the structure with
strategic choices; in this kind of application, structure semantics
reflects so-called \emph{positional} or \emph{memoryless} strategies,
while tree semantics captures \emph{perfect-recall} or
\emph{memoryfull} strategies. Since in this work we are interested in
perfect-recall strategies, we only consider the tree semantics.

\subsection{Model checking \QCTLsi}
\label{sec-modelchecking}

We now prove that the model-checking problem for \QCTLsi
under tree semantics is undecidable. This comes as 
no surprise since, as we will show, \QCTLsi can  express the existence of
 winning strategies in imperfect-information games.

 \newcounter{theo-undecidable}
\setcounter{theo-undecidable}{\value{theorem}}

\begin{theorem}
    \label{theo-undecidable}
The model-checking problem for \QCTLsi under tree-semantics is undecidable.
\end{theorem}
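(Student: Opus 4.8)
The plan is to reduce from the monadic second-order theory of the infinite binary tree augmented with the equal-length predicate, which is known to be undecidable~\cite{lauchli1987monadic}. This is the natural route given Example~\ref{example-QCTLi}, which already isolates the one nontrivial ingredient: the ability of a blind quantifier to pin down an entire level of the tree. First I would fix, once and for all, a single \CKS $\CKS_{2}$ whose tree-unfolding is the complete binary tree, with local states chosen so that the direction taken into each child (``left'' versus ``right'') is recorded in the labelling; all second-order quantification in the reduction is then over the nodes of this fixed unfolding. The reduction sends an MSO sentence $\Phi$ over the binary tree to a \QCTLsi state formula $\widehat{\Phi}$ so that $\Phi$ holds in the binary tree iff $\CKS_{2}\modelst\widehat{\Phi}$; undecidability of \QCTLsi model checking follows immediately.

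The translation of the MSO skeleton is the standard one witnessing $\QCTLs\equiv\MSO$ on trees. A set quantifier $\exists X$ becomes a full-observation quantifier $\existsp[p_{X}]{[n]}$, which under the tree semantics ranges exactly over arbitrary labellings of the nodes, i.e.\ over arbitrary subsets of the tree. First-order variables are encoded as singletons, each proposition $p_{x}$ being constrained by a \CTLs formula --- as in the standard encoding --- to label exactly one node. Membership $x\in X$ is rendered as the existence of a node carrying both $p_{x}$ and $p_{X}$, and the two successor relations are rendered by stepping with $\X$ into the child of the appropriate recorded direction. Each of these steps is routine and uses only ordinary (full-observation) quantifiers and \CTLs path formulas.

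The only predicate that escapes \QCTLs, and the reason the reduction needs the imperfect-information quantifiers, is equal-length. Here I would reuse the gadget of Example~\ref{example-QCTLi}: the key semantic fact is that two nodes are $\emptyset$-indistinguishable precisely when they have the same depth, so a proposition $p$ that is $\emptyset$-uniform is necessarily constant on each level, i.e.\ labels a union of full levels. Quantifying $\existsp[p]{\emptyset}$ and conjoining the \CTL formula $\ligne{p}$ (which holds iff $p$ marks exactly the nodes at one fixed depth) therefore marks exactly one level; asserting in addition that the nodes carrying $p_{x}$ and $p_{y}$ lie inside $p$ expresses exactly $\mathrm{el}(x,y)$. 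Thus $\mathrm{el}(x,y)$ translates to $\existsp[p]{\emptyset}\big(\ligne{p}\wedge \E\F(p_{x}\wedge p)\wedge \E\F(p_{y}\wedge p)\big)$.

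Finally I would verify, by induction on $\Phi$ with all quantifiers evaluated at the root, that \QCTLsi satisfaction of $\widehat{\Phi}$ faithfully tracks MSO satisfaction of $\Phi$. The main obstacle is precisely the equal-length clause: one must check that the $\emptyset$-uniformity constraint of the annotated quantifier matches ``constant on levels'' exactly, which hinges on the narrowing $\projI[\emptyset]{\cdot}$ collapsing every node of depth $d$ to $\blank^{d}$. The remaining cases --- encoding singletons and directions --- form the familiar $\QCTLs/\MSO$ dictionary and carry no new difficulty. An alternative and equally valid route, hinted at just before the statement, is to encode directly the existence of a distributed winning strategy in a multi-player reachability game of imperfect information, which is undecidable~\cite{peterson2001lower}: strategies are guessed by $\cobs_{i}$-annotated quantifiers whose uniformity captures observation-based play, and a single $\A$-path formula checks that every consistent play reaches the target.
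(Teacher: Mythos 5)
Your proposal is correct and follows essentially the same route as the paper: the paper's proof is precisely a reduction from the \MSO-theory of the binary tree with the equal-level predicate (undecidable by L\"auchli--Savioz), obtained by extending the standard \MSO-to-\QCTLs translation with the blind-quantifier gadget $\existsp[p]{\emptyset}\,\ligne{p}$ of Example~\ref{example-QCTLi} to capture equal depth. You simply spell out the details that the paper's proof sketch leaves implicit; your closing alternative via imperfect-information games is also the motivation the paper mentions just before the theorem, though not the argument it actually uses.
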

\begin{proof}
Let \MSOeql denote the extension of the logic \MSO  by a
  binary predicate symbol $\eql$. Formulas of \MSOeql are interpreted on trees, 
  and 
  the semantics of $\eql(x,y)$ is that $x$ and $y$ have the same depth in
  the tree. There is a translation of \MSO-formulas to \QCTLs-formulas that preserves satisfaction~\cite{DBLP:journals/corr/LaroussinieM14}. 
  This translation can be extended to map \MSOeql-formulas to \QCTLsi-formulas using the formula $\ligneb{\cdot}$
  from Example~\ref{example-QCTLi} to help capture the equal-length predicate.
 Our result follows since the \MSOeql-theory of the binary tree is undecidable~\cite{lauchli1987monadic}.
\end{proof}

\section{A decidable fragment of \QCTLsi: hierarchy on observations}
\label{sec-decidable}

The main result of this section is the identification of an important decidable fragment of \QCTLsi.

\begin{definition}[Hierarchical formulas]
  \label{def-hierarchical}
  A \QCTLsi formula $\phi$ is \emph{hierarchical} if for all
  subformulas $\phi_{1},\phi_{2}$ of the form
  $\phi_{1}=\existsp[p_{1}]{\cobs_{1}}\phi'_{1}$ and
  $\phi_{2}=\existsp[p_{2}]{\cobs_{2}}\phi'_{2}$ where  
  $\phi_{2}$
  is a subformula of $\phi'_{1}$, we have $\cobs_{1}\subseteq\cobs_{2}$.
\end{definition}

In other words, a formula is hierarchical if innermost propositional
quantifiers observe at least as much as  outermost ones.
We let \QCTLsih be the set of hierarchical \QCTLsi formulas.

\begin{theorem}
  \label{theo-decidable-QCTLi}
Model checking \QCTLsih under tree semantics is non-elementary decidable.
\end{theorem}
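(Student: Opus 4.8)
The plan is to follow the automata-theoretic approach for branching-time logics, building for each subformula a tree automaton that accepts exactly the trees satisfying it, and processing the propositional quantifiers from the innermost to the outermost. First I would set up the standard toolkit of alternating parity tree automata (APT) and nondeterministic parity tree automata (NPT), recalling that: APT are closed under union, intersection and, by dualisation, complementation at no cost; every APT can be turned into an equivalent NPT by the simulation theorem at the price of one exponential; NPT are closed under projection, which realises existential quantification over an atomic proposition; and, crucially, NPT admit the \emph{narrowing} operation of~\cite{kupferman1999church}, which turns an NPT over $\Dirtreei[\cobs']$-trees into an NPT over $\Dirtreei[\cobs]$-trees (for $\cobs \subseteq \cobs'$) accepting a tree iff its widening back to the $\cobs'$-components is accepted by the original. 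Narrowing away the components outside $\cobs$ followed by projecting out $p$ is exactly what enforces a choice of $p$-labelling that is $\cobs$-uniform, matching the semantics of $\existsp[p]{\cobs}$.

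The construction proceeds by induction on the nesting of propositional quantifiers. For a maximal state subformula containing no top-level propositional quantifier, I treat each maximal proper quantified subformula as a fresh atomic proposition and build an APT for the resulting \CTLs formula by the classical Kupferman--Vardi--Wolper construction; the boolean and path operators $\E\psi$ are handled there, and negation by dualisation. For a subformula $\existsp[p]{\cobs}\phi'$ I take the automaton already built for $\phi'$ --- which, by the innermost-first induction, is an NPT over $\Dirtreei[\cobs']$-trees, where $\cobs'$ is the observation of the nearest quantifier enclosed strictly inside this point (or $[n]$ at the \CTLs leaves) --- narrow it to forget the components in $\cobs' \setminus \cobs$, obtaining an NPT over $\Dirtreei[\cobs]$-trees, and then project out $p$. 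This is the step where the hierarchy hypothesis is indispensable: narrowing the inner automaton down to $\cobs$ is possible only when $\cobs \subseteq \cobs'$, which is precisely the condition guaranteed by Definition~\ref{def-hierarchical}. One then proves, by a standard unfolding/widening argument, that the resulting automaton accepts the narrowing of a tree iff that tree satisfies $\existsp[p]{\cobs}\phi'$ via a $\cobs$-uniform witness.

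To decide $\CKS \modelst \phi$ I would run the automaton obtained for the whole sentence on the appropriate narrowing of the unfolding $\unfold{\state_\init}$. Since this tree is regular, being finitely generated by the \CKS, membership reduces to taking the product of the automaton with $\CKS$ and testing nonemptiness of a parity tree automaton, which is decidable. The complexity is non-elementary because each propositional quantifier forces a nondeterminisation --- needed to enable the subsequent narrowing and projection --- and possibly a complementation, so the number of stacked exponentials grows with the quantifier-nesting depth, already the source of the non-elementary bound for plain \QCTLs model checking.

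The main obstacle, and the heart of the correctness proof, is showing that the chain of narrowing operations composes soundly across nested quantifiers with distinct observations, and that this composition is available exactly under the hierarchy condition: keeping the automaton narrowed to the current observation, rather than widening it back to $\Dirtreei[{[n]}]$, is what lets the next, coarser, quantifier narrow further, and this is where $\cobs \subseteq \cobs'$ is used essentially. A secondary point to handle carefully is the interleaving of alternation --- required for the \CTLs fragments and for negation --- with narrowing, which is defined only on nondeterministic automata; this interleaving is what dictates the repeated nondeterminisations and hence the tower-of-exponentials bound.
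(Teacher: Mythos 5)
Your high-level plan (Kupferman--Vardi--Wolper for the \CTLs layer, dualisation for negation, simulation then narrowing then projection for each $\existsp[p]{\cobs}$, hierarchy guaranteeing $\cobs\subseteq\cobs'$ so that successive narrowings compose, non-elementary cost from repeated nondeterminisation) matches the paper's strategy. But there is a genuine gap in the last step, and it is precisely the obstacle the paper's construction is designed to circumvent: you propose to ``run the automaton obtained for the whole sentence on the appropriate narrowing of the unfolding $\unfold{\state_\init}$'', and that object does not exist. Narrowing is defined on \emph{unlabelled} trees (and on automata); the narrowing of a \emph{labelled} tree $\ltree$ to $\Dirtreei[\cobs]$ is well defined only when the labelling is $\cobs$-uniform in every atomic proposition, since otherwise nodes with different labels get merged. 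The free atomic propositions of the formula are labelled by the \CKS and are in general not uniform for the observations occurring in the formula, so after the first narrowing there is no labelled $\Dirtreei[\cobs]$-tree whose widening reproduces $\unfold{\state_\init}$, and the guarantee of the narrowing theorem (membership of $\ltree$ iff membership of its \emph{widening}) no longer connects your automaton to the model. Your approach would only go through under the extra assumption that every free proposition is observable under every observation in the formula --- which is essentially the restriction of \CL that the paper explicitly sets out to remove.

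The paper's fix is to give up on model-independent automata: for each subformula $\phi$ and each state $\state$ of $\CKS$ it builds an automaton $\bigauto[\state]{\phi}$ whose state space contains the states of $\CKS$, which guesses paths of $\CKS$ internally and evaluates the free propositions directly from the labelling of $\CKS$, so that the input tree carries \emph{only} the quantified propositions $\APq(\Phi)$ (after renaming so that $\APq(\Phi)\inter\APfree(\Phi)=\emptyset$). The correctness invariant (Lemma~\ref{lem-final}) is stated via a \emph{merge} of the widened input tree with $\unfold{\state}$, and the final emptiness test is performed against the \emph{empty-labelled} complete $\Dirtreei[\phi]$-tree, not against the unfolding. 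With that modification the narrowing/projection chain you describe is sound; without it, the induction has nothing well-defined to run on once the first non-trivial narrowing has been applied.
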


Since our decision procedure for the hierarchical fragment
of \QCTLsi is based on an automata-theoretic approach, we
recall some definitions and results for alternating tree automata.

\subsection{Alternating parity tree automata}
\label{sec-ATA}


  We briefly recall the notion of alternating (parity) tree automata.
  Because it is sufficient for our needs and simplifies definitions,
  we assume that all input trees are complete trees.
For a set $Z$, $\boolp(Z)$ 
is the set of
formulas built from the elements of $Z$ as atomic propositions using the connectives $\ou$ and
$\et$, 
and with $\top,\perp \in \boolp(Z)$.
An \emph{alternating tree automaton (\ATA) on $(\APf,\Dirtree)$-trees}
is a structure $\auto=(\tQ,\tdelta,\tq_{\init},\couleur)$
where 
$Q$ is a finite set of states, $\tq_{\init}\in \tQ$ is an initial
state, $\tdelta : \tQ\times 2^{\APf} \rightarrow \boolp(\Dirtree\times
\tQ)$ is a transition function, and $\couleur:\tQ\to \setn$ is a
colouring function.  To ease reading we shall write atoms in
$\boolp(\Dirtree\times\tQ)$ between brackets, such as
$[x,\tq]$.  
A \emph{nondeterministic tree automaton (\NTA) on
  $(\APf,\Dirtree)$-trees} is an \ATA
$\auto=(\tQ,\tdelta,\tq_{\init},\couleur)$ such that for every $\tq\in
\tQ$ and $a\in 2^{\APf}$, $\tdelta(\tq,a)$ is written in
disjunctive normal form and for every direction $\dir\in \Dirtree$ 
each disjunct contains exactly one element of $\{\dir\}\times Q$. 
Acceptance is defined as usual (see, \eg,
\cite{DBLP:journals/tcs/MullerS87}), and we let $\lang(\auto)$ be the
set of trees accepted by $\auto$.

We recall three classic results on tree automata. The first one is
that nondeterministic tree automata are closed under projection, and
was established by
Rabin to deal with second-order monadic quantification:
\begin{theorem}[Projection \cite{rabin1969decidability}]
  \label{theo-projection}
  Given an \NTA $\NTA$ and an atomic
  proposition $p\in\AP$, one can build an \NTA $\proj{\NTA}$ 
  such that
  $\lang(\proj{\NTA})=\proj{\lang(\NTA)}$. 
\end{theorem}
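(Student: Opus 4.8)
The plan is to build $\proj{\NTA}$ from $\NTA$ by keeping the state space, initial state and colouring unchanged, and modifying only the transition function so that, at each node, the automaton nondeterministically guesses the truth value of $p$ and then behaves as $\NTA$ would under that guess. Concretely, writing $\NTA=(\tQ,\tdelta,\tq_{\init},\couleur)$, I define $\proj{\NTA}\egdef(\tQ,\proj{\tdelta},\tq_{\init},\couleur)$, where for every $\tq\in\tQ$ and every label $a\in 2^{\APf}$ I set $\proj{\tdelta}(\tq,a)\egdef\tdelta(\tq,a)\ou\tdelta(\tq,a\cup\{p\})$ when $p\notin a$, and $\proj{\tdelta}(\tq,a)\egdef\perp$ when $p\in a$ (the latter forces $\lang(\proj{\NTA})$ to contain only $p$-free trees, exactly as $\proj{\lang(\NTA)}$ does). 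The disjunct $\tdelta(\tq,a)$ corresponds to guessing $p\notin\lab(\noeud)$ and $\tdelta(\tq,a\cup\{p\})$ to guessing $p\in\lab(\noeud)$. Since a disjunction of two formulas in disjunctive normal form is again in disjunctive normal form, and each original disjunct already sends exactly one state per direction, $\proj{\tdelta}$ still has the shape required of an \NTA; thus $\proj{\NTA}$ is a genuine \NTA.

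For the inclusion $\proj{\lang(\NTA)}\subseteq\lang(\proj{\NTA})$, I would take $t'\in\proj{\lang(\NTA)}$, so $t'=\proj{t}$ for some $t=(\tree,\lab)\in\lang(\NTA)$, and fix an accepting run $\rho$ of $\NTA$ on $t$. I claim the same $\rho$ is an accepting run of $\proj{\NTA}$ on $t'$: at every node $\noeud$ the transition witnessing $\rho$ is a disjunct of $\tdelta(\rho(\noeud),\lab(\noeud))$, and since $t'$ labels $\noeud$ by $\lab(\noeud)\setminus\{p\}$, this formula is precisely one of the two disjuncts appearing in $\proj{\tdelta}(\rho(\noeud),\lab(\noeud)\setminus\{p\})$ (the first if $p\notin\lab(\noeud)$, the second if $p\in\lab(\noeud)$). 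Hence the same local choice is available in $\proj{\NTA}$, and as the colouring is unchanged, $\rho$ satisfies the parity condition on the same branches and is accepting.

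For the converse $\lang(\proj{\NTA})\subseteq\proj{\lang(\NTA)}$, I would start from an accepting run $\rho$ of $\proj{\NTA}$ on a tree $t'=(\tree,\lab')$. Here I rely on the defining property of an \NTA: a run assigns exactly one state $\rho(\noeud)$ to each node $\noeud$, so the disjunct of $\proj{\tdelta}(\rho(\noeud),\lab'(\noeud))$ selected by $\rho$ at $\noeud$ is unambiguous, and it originates from exactly one of $\tdelta(\rho(\noeud),\lab'(\noeud))$ or $\tdelta(\rho(\noeud),\lab'(\noeud)\cup\{p\})$. I then define a labelling $\lab$ by putting $p\in\lab(\noeud)$ precisely when the selected disjunct came from the second case (and otherwise copying $\lab'(\noeud)$), and set $t\egdef(\tree,\lab)$. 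By construction $t\Pequiv[p] t'$, and the very same $\rho$ is an accepting run of $\NTA$ on $t$, so $t\in\lang(\NTA)$ and $t'=\proj{t}\in\proj{\lang(\NTA)}$.

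The one genuinely load-bearing step is the reconstruction in the previous paragraph: it works only because an \NTA places a single state at each node, which makes the guessed value of $p$ well-defined per node and globally consistent. For a general alternating automaton the same input node could be visited by several copies making incompatible guesses, so no coherent $p$-labelling could be extracted; this is exactly why the projection theorem is stated for nondeterministic, rather than alternating, tree automata. Everything else, namely preservation of the \NTA normal form and of the parity acceptance condition, is routine.
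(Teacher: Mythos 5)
Your construction is exactly the one the paper sketches after the theorem statement (guess the truth value of $p$ at each node via the disjunction of the two instantiations of $\tdelta$, and use the fact that an \NTA run places a single state per node to make the guessed $p$-labelling well defined), and your correctness argument in both directions is sound. The only cosmetic difference is that you force input trees to be $p$-free by mapping labels containing $p$ to $\perp$, which actually matches the paper's definition of $\proj{\lang(\NTA)}$ more literally than its own informal $\tdelta(q,a\setminus\{p\})$ clause; nothing further is needed.
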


Because it will be important to understand the automata construction
for our decision procedure in Section~\ref{sec-decidable}, we briefly recall
that the projected automaton ${\proj{\NTA}}$ is simply automaton $\NTA$ 
with the only difference that when it reads the label of a node, it
can choose whether $p$ is there or not: if $\tdelta$ is the transition
function of $\NTA$, that of ${\proj{\NTA}}$ is
$\tdelta'(q,a)=\tdelta(q,a\union \{p\}) \ou
\tdelta(q,a\setminus\{p\})$, for any state $q$ and label $a\in 2^{\APf}$. Another way of seeing it is that
$\proj{\NTA}$ first guesses a $p$-labelling for the input tree, and
then simulates $\NTA$ on this modified input.
To prevent $\proj{\NTA}$ from guessing different labels for a same
node in different executions, it is crucial that $\NTA$ be nondeterministic, 
which is the reason why we need the next classic result: 
 the crucial simulation theorem, due to Muller and
Schupp.
\begin{theorem}[Simulation \cite{DBLP:journals/tcs/MullerS95}]
\label{theo-simulation}
Given an \ATA $\ATA$, one can build an \NTA $\NTA$ 
 such that $\lang(\NTA)=\lang(\ATA)$.
\end{theorem}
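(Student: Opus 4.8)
The plan is to follow the classical game- and determinisation-based route to nondeterminising an alternating parity tree automaton. Write $\ATA=(\tQ,\tdelta,\tq_{\init},\couleur)$, and recall that, since all inputs are complete $\Dirtree$-trees, acceptance of an input $(\APf,\Dirtree)$-tree $\ltree=(\tree,\lab)$ can be phrased as a two-player parity game $G(\ATA,\ltree)$: its positions are pairs (node $\noeud$, state $\tq$) together with auxiliary positions that resolve the positive boolean formula $\tdelta(\tq,\lab(\noeud))$, Eve choosing the disjuncts and the atoms $[\dir,\tq']$ to pursue while Adam chooses the conjuncts, and a play is won by Eve iff the colours of its visited states satisfy $\couleur$'s parity condition. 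By definition $\ltree\in\lang(\ATA)$ iff Eve wins $G(\ATA,\ltree)$, and by memoryless determinacy of parity games this holds iff Eve has a \emph{positional} winning strategy. First I would observe that such a positional strategy is \emph{local}: at a node $\noeud$ it depends only on $\noeud$ and on the finite set of states active there, and it selects, for each active state $\tq$, a minimal model of $\tdelta(\tq,\lab(\noeud))$, thereby sending to each direction $\dir$ a set of successor states.

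Next I would build $\NTA$ so that, while reading the tree top-down, it guesses this local strategy. Its states are pairs $(S,m)$ with $S\subseteq\tQ$ the currently active set (initially $\{\tq_{\init}\}$) and $m$ an auxiliary memory explained below; on reading a label $a\in 2^{\APf}$ it guesses, for every $\tq\in S$, a minimal model of $\tdelta(\tq,a)$, and for each direction $\dir$ collects into $S_\dir$ the states $\tq'$ with $[\dir,\tq']$ chosen. Since in every direction $\dir$ exactly one successor $\NTA$-state is sent, namely $(S_\dir,m_\dir)$, the automaton is nondeterministic in the required disjunctive normal form. What remains is to equip $\NTA$ with an acceptance condition equivalent to ``the guessed run of $\ATA$ is accepting'', i.e., that every infinite branch of the run-tree satisfies $\couleur$'s parity condition.

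This last point is the crux and the step I expect to be the main obstacle. Along a single path $\tpath$ of the input tree the guessed run induces not one sequence of states but a finitely-branching \emph{thread tree}, since one active state may spawn several states in the same direction; acceptance requires that \emph{every} infinite thread along $\tpath$ be winning. This is a universal (``for all threads'') parity condition on the sequence of local transition profiles read along $\tpath$, and recognising it by a \emph{deterministic} parity word automaton is exactly what demands determinisation of $\omega$-automata \`a la Safra--Muller--Schupp. Concretely, I would fix a deterministic parity automaton $\mathcal{D}$ that reads such profiles and accepts iff all threads are winning, let the memory component $m$ range over the states of $\mathcal{D}$, run $\mathcal{D}$ along each path (feeding it in direction $\dir$ with the profile $S\to S_\dir$), and inherit the priorities of $\mathcal{D}$ as the colouring of $\NTA$. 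Because $\mathcal{D}$ is deterministic, one priority is emitted per step along each path, so $\NTA$ is a genuine nondeterministic parity tree automaton; by construction its accepting runs are precisely the positional winning strategies of Eve, whence $\lang(\NTA)=\lang(\ATA)$. Everything outside the determinisation of $\mathcal{D}$ is bookkeeping of active-state sets and of the normal form; the determinisation is the only genuinely hard ingredient and is the source of the exponential blow-up in the state space.
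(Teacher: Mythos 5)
The paper does not prove this statement at all --- it imports it verbatim from Muller and Schupp --- and your sketch is precisely the standard proof of that cited result: phrase acceptance as a parity game over (node, state) pairs, invoke memoryless determinacy, let the nondeterministic automaton guess Eve's positional choices as sets of active states in the required disjunctive normal form, and discharge the universal ``every thread along a branch is winning'' condition by running a deterministic parity word automaton on transition profiles, obtained by Safra/Muller--Schupp determinisation. Your proposal is therefore correct (the one step you leave implicit --- the existence of the deterministic automaton $\mathcal{D}$ --- follows by determinising and complementing a nondeterministic automaton for ``some thread violates the parity condition'', i.e.\ exactly the determinisation you invoke), and it is essentially the same argument as in the source the paper cites.
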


The third result was established by Kupferman and Vardi to deal with
imperfect information aspects in distributed synthesis. To state it we need
to define a widening operation on trees which simply expands the directions
in a tree.

\head{Widening~\cite{kupferman1999church}} 
Let $\Dirtree$ and $\Dirtreea$ be two finite sets, let $\ltree$ be
an $\Dirtree$-tree with root $\dir$, and let $\dira\in\Dirtreea$. The
\emph{$\Dirtreea$-widening} of $\ltree$ with root $(\dir,\dira)$ is
the $\Dirtree\times\Dirtreea$-tree
  \[{\liftI[\Dirtreea]{\dira}{\tree}}\egdef
  \{\noeud\in (\dir,\dira) \cdot(\Dirtree\times\Dirtreea)^{*}\mid
  {\projI[\Dirtree]{\noeud}}\in\tree \}.\]

For an
 $(\APf,\Dirtree)$-tree
$\ltree=(\tree,\lab)$, we let $\liftI[\Dirtreea]{\dira}{\ltree}\egdef (\liftI[\Dirtreea]{\dira}{\tree},\lab')$
where $\lab'(\noeud)\egdef \lab(\projI[\Dirtree]{\noeud})$. 

Similarly to the narrowing operation, we extend this definition to tuples of local
states by letting, for $J\subseteq I\subseteq
[n]$,  $\tree$ an
 $\Dirtreei[J]$-tree  and $\dirz'\in
 \Dirtreei[I\setminus J]$,
\[\liftI[I]{\dirz'}{\tree} \egdef\liftI[{\Dirtreei[{I\setminus J}]}]{\dirz'}{\tree},\]
and similarly for a labelled $\Dirtreei[J]$-tree $\ltree$,
\[\liftI[I]{\dirz'}{\ltree} \egdef\liftI[{\Dirtreei[{I\setminus J}]}]{\dirz'}{\ltree}.\]
Recall that because the sets of local states $\setlstates_{i}$ are
disjoint, the order of local states in a tuple does not matter and we
can identify $\Dirtreei[I]$ with
$\Dirtreei[J]\times\Dirtreei[I\setminus J]$.

The rough idea of the narrowing operation on \ATA is
that,  if one just observes
 $\Dirtree$,  uniform  $p$-labellings on 
$\Dirtree\times\Dirtreea$-trees  can be obtained by choosing the
labellings directly on $\Dirtree$-trees, and then lifting them to $\Dirtree\times\Dirtreea$.

\begin{theorem}[Narrowing \cite{kupferman1999church}]
  \label{theo-narrow}
  Given an \ATA $\ATA$ on $\Dirtree\times\Dirtreea$-trees 
  one can build an \ATA ${\narrow[\Dirtree]{\ATA}}$ on $\Dirtree$-trees
 such that 
 for all $\dira\in\Dirtreea$, $\ltree\in\lang(\narrow[\Dirtree]{\ATA})$ iff $\liftI[\Dirtree\times\Dirtreea]{\dira}{\ltree}\in\lang(\ATA)$.
\end{theorem}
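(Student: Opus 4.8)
The plan is to leave the automaton almost untouched and merely erase the $\Dirtreea$-component of every direction occurring in its transitions. Concretely I would set $\narrow[\Dirtree]{\ATA}\egdef(\tQ,\tdelta',\tq_{\init},\couleur)$, keeping the states, initial state and colouring of $\ATA$, and defining $\tdelta'(q,a)$ to be the positive boolean formula obtained from $\tdelta(q,a)\in\boolp((\Dirtree\times\Dirtreea)\times\tQ)$ by substituting the atom $[x,q']$ for every atom $[(x,y),q']$. Because $\ATA$ is alternating, several copies may legitimately be sent to the same child, so collapsing all directions $(x,y)$ onto the single direction $x$ causes no well-formedness problem: $\tdelta'(q,a)$ is a genuine element of $\boolp(\Dirtree\times\tQ)$, and $\narrow[\Dirtree]{\ATA}$ is an \ATA on $\Dirtree$-trees.

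Correctness hinges on a single property of the widening $W\egdef\liftI[\Dirtree\times\Dirtreea]{\dira}{\ltree}$: by definition the label of a node of $W$ is the label in $\ltree$ of its $\Dirtree$-narrowing, so for each fixed $x\in\Dirtree$ the children $\noeud\cdot(x,y)$, $y\in\Dirtreea$, are roots of label-isomorphic subtrees of $W$ (each a $\Dirtreea$-widening of the subtree of $\ltree$ at $\projI[\Dirtree]{\noeud}\cdot x$); the $\Dirtreea$-copies are thus completely \emph{interchangeable}. I would then prove the two implications by transforming accepting runs, which I find cleanest to phrase via the standard acceptance game for alternating automata~\cite{DBLP:journals/tcs/MullerS87}. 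For ``$W\in\lang(\ATA)$ implies $\ltree\in\lang(\narrow[\Dirtree]{\ATA})$'' I would apply the node-narrowing $\projI[\Dirtree]{\cdot}$ to the $W$-component of every run node: labels agree under narrowing and the substitution sends $[(x,y),q']$ to $[x,q']$, so each local step stays consistent with $\tdelta'$ while the state-and-colour sequence along every branch is untouched. For the converse I would lift an accepting run of $\narrow[\Dirtree]{\ATA}$ on $\ltree$, sending each copy routed to a direction $x$ out to the concrete $W$-children $\noeud\cdot(x,y)$; interchangeability guarantees that the object so obtained is a valid run of $\ATA$ on $W$.

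The delicate point, which I expect to be the main obstacle, is that the substitution is many-to-one on directions: distinct atoms $[(x,y_{1}),q']$ and $[(x,y_{2}),q']$ collapse to a single $[x,q']$, so I must check that it preserves satisfaction of the transition formulas in both directions and that the acceptance condition survives. Since the formulas are positive, this is a monotonicity argument --- a satisfying assignment of $\tdelta'(q,a)$ lifts to one of $\tdelta(q,a)$ by turning on all preimages of each chosen atom (in $W$, fanning a copy out to every $(x,y)$), and conversely a satisfying assignment of $\tdelta(q,a)$ projects to one of $\tdelta'(q,a)$ by an OR over $y$ --- and interchangeability of the identical $(x,y)$-subtrees is exactly what makes the fan-out legitimate. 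I would package this as a correspondence between winning strategies of the existential player in the two acceptance games: a play in the $W$-game and its $\Dirtree$-narrowing in the $\ltree$-game visit identical sequences of states and colours, so the parity condition transfers verbatim and neither the collapse nor the fan-out can change the winner. Finally, because the root label of $W$ does not depend on the root's $\Dirtreea$-component, the whole equivalence holds for every $\dira\in\Dirtreea$, as the statement demands.
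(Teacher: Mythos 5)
The paper does not actually prove this theorem---it is imported verbatim from Kupferman and Vardi's work on synthesis with incomplete information \cite{kupferman1999church}---but your construction (keep states, initial state and colouring, and substitute $[x,q']$ for every atom $[(x,y),q']$ in the transition function) is exactly the one given there, and your correctness argument is sound: positivity/monotonicity of the transition formulas handles the many-to-one collapse of directions in both implications, and the label-isomorphism of the $\Dirtreea$-indexed copies of each subtree of the widening is precisely what legitimises the fan-out, with state-and-colour sequences (hence the parity condition) preserved branchwise. Your side remarks are also on point, namely that alternation is what makes sending several copies to the collapsed direction harmless (a nondeterministic automaton would lose its format, which is why the paper must invoke the Simulation theorem before projecting) and that acceptance is insensitive to the name of the root direction, which gives the quantification over all $\dira\in\Dirtreea$.
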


\subsection{Translating \QCTLsih to \ATA}
In order to prove Theorem~\ref{theo-decidable-QCTLi} we need some more
notations and a technical lemma that contains the automata construction.

For every $\phi\in\QCTLsi$, we let \[\Iphi\egdef
\biginter_{\cobs\in\setobs}\cobs \subseteq [n],\] where $\setobs$ is the set of
concrete observations
  that occur in $\phi$, with the intersection over the empty set 
  defined as $[n]$.  We also let $\Dirtreei[\phi]\egdef
  \Dirtreei[\Iphi]$ (recall that for $I\subseteq [n]$,
  $\Dirtreei=\prod_{i\in I}\setlstates_{i}$).

Our construction, that transforms a \QCTLsih formula $\phi$ and a \CKS
$\CKS$ into an \ATA,
builds upon the classic construction 
  from~\cite{DBLP:journals/jacm/KupfermanVW00}, which builds hesitant
  \ATA for \CTLs formulas. Since our aim here is to establish
  decidability and that the hesitant
  condition is only used to improve complexity, we do not consider it. However
  we need to develop an original technique to implement
  quantifiers with imperfect information thanks to automata narrowing and projection.

  The classical approach to model checking via tree automata is to
  build an automaton that accepts all tree models of the input
  formula,  and check whether it accepts the unfolding of the
  model~\cite{DBLP:journals/jacm/KupfermanVW00}.
  We now explain how we adapt this approach.

  \head{Narrowing of non-uniform trees} Quantification on atomic
  propositions is classically performed by means of automata
  projection (see Theorem~\ref{theo-projection}). But in order to
  obtain a labelling that is uniform with regards to the observation
  of the quantifier, we need to make use of the narrow operation (see
  Theorem~\ref{theo-narrow}).  Intuitively, to check that a formula
  $\existsp{\cobs} \phi$ holds in a tree $\ltree$, we would like to
  work on its narrowing $\ltree'\egdef\projI[\cobs]{\ltree}$, guess a
  labelling for $p$ on this tree thanks to automata projection,
thus  obtaining a tree $\ltree'_{p}$, take its widening
  $\ltree_{p}''\egdef\liftI[{[n]}]{}{\ltree'_{p}}$, obtaining a
  tree with an $\cobs$-uniform labelling for $p$,  and then check
  that $\phi$ holds on $\ltree_{p}''$.
  The problem here is that, {while} the narrowing  $\projI[\cobs]{\tree}$
  of an unlabelled tree $\tree$
  is well defined (see Section~\ref{sec-QCTLsi-semantics}), that  of a
  labelled tree $\ltree=(\tree,\lab)$  is undefined: indeed, unless $\ltree$ is
$\obs$-uniform in every atomic proposition in $\APf$, there is no
 way
to define the labelling of $\projI[\cobs]{\tree}$ without losing
information.
This implies that we cannot feed (a narrowing of) the unfolding of the
model to our automata. Still, we need an input tree to be
successively  labelled and widened to guess uniform labellings.

\head{Splitting quantified from free propositions} To address this
problem, we remark that since we are interested in model checking a
\QCTLsi formula $\phi$ on a \CKS $\CKS$, the
automaton that we build for $\phi$ can depend on $\CKS$.  
It can thus guess paths in $\CKS$, and evaluate
free occurrences of atomic propositions in $\CKS$ without reading the
input tree. The input tree is thus no
longer used to represent the model. However we use it to carry
labellings for quantified propositions $\APq(\phi)$: we provide the
automaton with an input tree whose labelling is initially empty, and
the automaton, through successive narrowing and projection operations,
decorates it with uniform labellings for quantified atomic propositions.

We remark that this technique allows one to go beyond \CL~\cite{DBLP:conf/csl/FinkbeinerS10}: 
by separating between quantified atomic propositions (that need to be uniform) and free atomic
propositions (that state facts about the model), we manage to remove the
restriction present in \CL, that requires that all facts about
the model are known to every strategy/agent (see Proposition~\ref{prop:CLtoSLi}).

To do this we  assume without loss of generality that propositions
that are quantified in $\phi$ do not appear free in $\phi$, \ie,
$\APq(\phi)\inter\APfree(\phi)=\emptyset$. If necessary, for every
$p\in\APq(\phi)\inter\APfree(\phi)$, we take a fresh atomic
proposition $p'$ and replace all quantified occurrences of $p$ in
$\phi$ with $p'$. We obtain an equivalent formula $\phi'$ on
$\APf\,'\egdef \APf\union \{p'\mid p'\in \APq(\phi)\inter\APfree(\phi)\}$
such that $\APq(\phi')\inter\APfree(\phi')=\emptyset$.
Observe also that given a formula $\phi$ such that
$\APq(\phi)\inter\APfree(\phi)=\emptyset$, a \CKS $\CKS$ and a state $\state\in\CKS$, the
truth value of $\phi$ in $\CKS,\state$ does not depend on the
labelling of $\CKS$ for atomic propositions in $\APq(\phi)$, which can
thus be forgotten.

As a consequence, henceforth we assume that an instance $(\varphi,\CKS)$ of the
model-checking problem for \QCTLsi is such that
$\APq(\phi)\inter\APfree(\phi)=\emptyset$, and $\CKS$ is a \CKS over $\APfree(\phi)$.


\head{Merging the  decorated input tree and the model}
To state the correctness of our construction, we will need to merge
the labels for quantified propositions, carried by the input tree,
with those for free propositions, carried by \CKS $\CKS$. Because,
through successive widenings, the input tree (represented by $\ltree$
in the definition below) will necessarily be a
complete tree, its domain will always contain the domain of the
unfolding of $\CKS$ (represented by $\ltree'$ below), hence the
following definition.

\begin{definition}[Merge]
  \label{def-merge}
Let
 $\ltree=(\tree,\lab)$ be a complete
$(\APf,\Dirtree)$-tree and  $\ltree'=(\tree',\lab')$ an
$(\APf\,',\Dirtree)$-tree, where $\APf\inter\APf\,'=\emptyset$. We
 define the \emph{merge} of $\ltree$ and $\ltree'$
 as the $(\APf\union\APf\,',\Dirtree)$-tree $\ltree\merge\ltree'\egdef
(\tree\cap\tree'=\tree',\lab'')$, where
$\lab''(\noeud)=\lab(\noeud) \union \lab'(\noeud)$.
\end{definition}

We now describe our automata construction and establish the following
lemma, which is our main technical contribution.

\newcounter{lem-finalaa}
\setcounter{lem-finalaa}{\value{lemma}}

\begin{lemma}[Translation]
    \label{lem-final}
Let $(\Phi,\CKS)$ be an instance of the model-checking problem
for \QCTLsih.
    For every subformula
    $\phi$ of $\Phi$ and state $\state$ of $\CKS$, one can build an
    \ATA $\bigauto[\state]{\phi}$ on $(\APq(\Phi),\Dirtreei[\phi])$-trees
    such that for every 
    $(\APq(\Phi),\Dirtreei[\phi])$-tree $\ltree$ rooted in $\projI[{\Iphi}]{\state}$,
    \begin{equation*}
      \ltree\in\lang(\bigauto[\state]{\phi}) \mbox{\;\;\;iff\;\;\;}
      \liftI[{[n]}]{\dira}{\ltree}\merge\;\unfold{\state} \modelst \phi,
      \mbox{\;\;\; where }\dira=\projI[{[n]\setminus I_{\phi}}]{\state}.
    \end{equation*}

  \end{lemma}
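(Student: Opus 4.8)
The plan is to prove the lemma by structural induction on $\phi$, constructing $\bigauto[\state]{\phi}$ for all states $\state$ of $\CKS$ simultaneously and driving the construction with the three automata operations recalled above (Theorems~\ref{theo-projection}, \ref{theo-simulation} and~\ref{theo-narrow}). The organising principle is that $\bigauto[\state]{\phi}$ never reads free propositions from its input: being allowed to depend on $\CKS$, it guesses a path through $\CKS$ and evaluates the atoms of $\APfree(\Phi)$ along it, using the input tree only to carry the quantified propositions $\APq(\Phi)$. For an atom $\phi=p$ we have $I_\phi=[n]$, so widening and narrowing are trivial and the automaton merely inspects the root: it consults $\lab(\state)$ when $p\in\APfree(\Phi)$ and the root label of the input tree when $p\in\APq(\Phi)$, which is correct by the definition of $\merge$. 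For $\phi\ou\phi'$ note that $I_{\phi\ou\phi'}=I_\phi\cap I_{\phi'}$, so $\Dirtreei[{I_{\phi\ou\phi'}}]$ is coarser than both subformula alphabets; I would narrow $\bigauto[\state]{\phi}$ and $\bigauto[\state]{\phi'}$ down to $\Dirtreei[{I_{\phi\ou\phi'}}]$-automata via Theorem~\ref{theo-narrow} and take their union under a fresh disjunctive initial state, with the narrowing equivalence and the composition of widenings yielding exactly the semantic disjunction. For $\neg\phi$ we have $I_{\neg\phi}=I_\phi$, so no narrowing is needed and I would take the dual automaton (exchange $\ou/\et$ and $\top/\perp$ in $\tdelta$ and complement the parity colouring), which recognises the complement.

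For $\E\psi$ I would adapt the hesitant-automaton construction of~\cite{DBLP:journals/jacm/KupfermanVW00}. Writing $\phi_1,\dots,\phi_k$ for the maximal proper state subformulas of $\psi$ and reading $\psi$ as an LTL formula over fresh atoms $b_1,\dots,b_k$, the automaton for $\E\psi$ stores the current $\CKS$-state in its control, guesses a path $\state=\state_0\,\state_1\cdots$ in $\CKS$ while descending the input tree along the directions $\projI[{I_{\E\psi}}]{\state_1}\,\projI[{I_{\E\psi}}]{\state_2}\cdots$, and runs a word automaton for $\psi$ on the guessed truth values of the $b_i$. A positive (resp. negative) guess of $b_i$ at $\state_j$ is discharged by spawning $\bigauto[\state_j]{\phi_i}$ (resp. its dual), each first narrowed from $\Dirtreei[{I_{\phi_i}}]$ to $\Dirtreei[{I_{\E\psi}}]$; this narrowing is legitimate since $\phi_i$ is a subformula of $\E\psi$, whence $I_{\E\psi}\subseteq I_{\phi_i}$. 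Carrying the $\CKS$-state recovers the free propositions and enforces the transition relation, the sub-automata discharge the state subformulas, and correctness reduces, via the induction hypotheses for the $\phi_i$ and the composition of widenings, to soundness of the classical path construction.

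The quantifier case $\existsp[p]{\cobs}\phi$ is the heart of the proof and the only place where the hierarchy hypothesis enters. Since $(\Phi,\CKS)\in\QCTLsih$, Definition~\ref{def-hierarchical} forces $\cobs\subseteq\cobs'$ for every observation $\cobs'$ occurring in $\phi$, hence $\cobs\subseteq I_\phi$ and $I_{\existsp[p]{\cobs}\phi}=\cobs\cap I_\phi=\cobs$. This inclusion is precisely what makes $\Dirtreei[\cobs]$ coarser than $\Dirtreei[{I_\phi}]$, so I can (i) narrow $\bigauto[\state]{\phi}$ to a $\Dirtreei[\cobs]$-automaton (Theorem~\ref{theo-narrow}), (ii) make it nondeterministic (Theorem~\ref{theo-simulation}), and (iii) project away $p$ (Theorem~\ref{theo-projection}), taking the result to be $\bigauto[\state]{\existsp[p]{\cobs}\phi}$. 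Because the projection guesses the $p$-labelling directly on the coarse $\Dirtreei[\cobs]$-tree, its widening to $\Dirtreei[{[n]}]$ is automatically $\cobs$-uniform in $p$, and conversely every $\cobs$-uniform $p$-labelling of the merged tree descends to a labelling of the coarse tree, because the $\cobs$-classes of nodes of $\unfold{\state}$ are exactly the coarse nodes that are their projections. Composing the narrowing equivalence with the induction hypothesis for $\phi$ then shows that $\bigauto[\state]{\existsp[p]{\cobs}\phi}$ accepts $\ltree$ iff some $\cobs$-uniform $p$-variant of $\liftI[{[n]}]{\dira}{\ltree}\merge\unfold{\state}$ satisfies $\phi$, which is exactly the tree semantics of $\existsp[p]{\cobs}\phi$.

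The step I expect to be the main obstacle is exactly this quantifier case. Two points need care: the order narrow $\to$ simulate $\to$ project must be respected, since nondeterminism is required before projection so that a single uniform $p$-labelling is guessed rather than conflicting ones on different branches; and one must verify the two-way correspondence between $\cobs$-uniform $p$-labellings of the merged tree (whose domain is the restricted set $\unfold{\state}$) and arbitrary $p$-labellings of the complete coarse tree. The remaining, more routine, difficulty shared by every case is keeping the nested widenings, narrowings and merges aligned as the index sets $I_\phi$ change along the syntax tree; here the hierarchy hypothesis is what guarantees that these index sets only shrink when a quantifier is crossed, so that each narrowing called for above is well defined.
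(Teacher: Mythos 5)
Your proposal is correct and takes essentially the same route as the paper's proof: structural induction building $\bigauto[\state]{\phi}$ for all states simultaneously, with the atom case split between free propositions (read off $\CKS$) and quantified ones (read off the input tree), disjunction handled by narrowing both subautomata to $\Dirtreei[\phi]$ and taking a union, $\E\psi$ handled by guessing a path in $\CKS$ while running a nondeterministic word automaton for $\psi$ over its maximal state subformulas and spawning the subformula automata or their duals, and the quantifier case handled by narrow--simulate--project in exactly that order, with the hierarchy hypothesis supplying $\cobs\subseteq I_{\phi'}$. The only minor divergence is in the $\E\psi$ case, where you narrow each $\bigauto[\state']{\phi_i}$ down to $\Dirtreei[\phi]$, whereas the paper asserts $I_{\phi_i}=I_{\phi}$ for maximal state subformulas and skips this step; your extra narrowing is harmless (indeed slightly more careful) and does not change the argument.
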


For an $\Dirtreei[I]$-tree $\ltree$, from now on ${\liftI[{[n]}]{}{\ltree}}\merge
\;\unfold{\state}$ 
 stands for       ${\liftI[{[n]}]{\dira}{\ltree}}\merge
\;\unfold{\state}$, where $\dira=\projI[{[n]\setminus I}]{\state}$:
 the missing local states in the root of $\ltree$ are filled with those from $\state$.

\begin{proof}[Proof sketch of Lemma~\ref{lem-final}] 
Let $(\Phi,\CKS)$ be an instance of the model-checking problem
for \QCTLsih.
  Let $\APq=\APq(\Phi)$ and
  $\APfree=\APfree(\Phi)$, and recall that $\CKS$ is labelled over $\APfree$.
  For each state $\state\in\setstates$ and each subformula
  $\phi$ of $\Phi$ (note that all subformulas of $\Phi$ are also
  hierarchical), we define by induction on $\phi$ the \ATA
  $\bigauto{\phi}$ on $(\APq,\Dirtreei[\phi])$-trees.

~\newline
\noindent$\bm{\phi=p:}$

 We let $\bigauto{p}$ be the \ATA over $\Dirtreei[{[n]}]$-trees with one unique
  state $\tq_\init$, with transition function defined as follows:
  \[\tdelta(\tq_\init,a)=
  \begin{cases}
    \top  & \mbox{if } 
    \begin{array}{c}
p\in\APfree \mbox{ and }p\in\labS(\state)  \\
\mbox{ or   }\\ p\in \APq \mbox{ and }p\in a
    \end{array}
\\
    \perp & \mbox{if }  
    \begin{array}{c}
p\in\APfree \mbox{ and }p\notin\labS(\state) \\
 \mbox{ or  } \\
p\in \APq \mbox{ and }p\notin a      
    \end{array}
  \end{cases}
\]

~\newline
\noindent$\bm{\phi=\neg\phi':}$

We obtain $\bigauto{\phi}$  by
  dualising  $\bigauto{\phi'}$, a classic operation.

~\newline
\noindent$\bm{\phi=\phi_{1}\ou\phi_{2}:}$
  
 Because
  $\Iphi=\Iphi[\phi_{1}]\cap\Iphi[\phi_{2}]$, and each
  $\bigauto{\phi_{i}}$ works on $\Dirtreei[\phi_{i}]$-trees, we first 
  narrow them so that they work on $\Dirtreei[\phi]$-trees: 
  for $i\in \{1,2\}$, we let $\ATA_{i}\egdef
  {\narrow[\Iphi]{\bigauto{\phi_{i}}}}$.
  %
We then build $\bigauto{\phi}$ by taking the disjoint union of $\ATA_{1}$ and $\ATA_{2}$
  and adding a new initial state that nondeterministically chooses
  which of $\ATA_{1}$ or $\ATA_{2}$ to execute on the input tree, so
  that $\lang(\bigauto{\phi})=\lang(\ATA_{1})\union\lang(\ATA_{2})$.

  
~\newline
\noindent$\bm{\phi=\E\psi:}$

  The aim is to build an automaton $\bigauto{\phi}$ that works on
  $\Dirtreei[\phi]$-trees and that on input $\ltree$,  checks for the
  existence of a path in
  $\liftI[{[n]}]{}{\ltree}\merge\;\unfold{\state}$ that
  satisfies $\psi$. Observe that a path in
  $\liftI[{[n]}]{}{\ltree}\merge\;\unfold{\state}$ is a path in
  $\unfold{\state}$, augmented with the labelling for atomic
  propositions in $\APq$ carried by $\ltree$.

  To do so,  $\bigauto{\phi}$  guesses a path $\tpath$ in $(\CKS,\state)$.
It  remembers the current state in $\CKS$, which provides the
  labelling for atomic propositions in $\APfree$, and while it guesses
  $\tpath$ it follows its $\Dirtreei[\phi]$-narrowing  in its input
  tree $\ltree$ (which always exists since input to tree automata are
  complete trees),
 reading the labels  to evaluate propositions in
  $\APq$.
  
Let $\max(\psi)=\{\phi_1,\ldots,\phi_n\}$ be the
  set of maximal state sub-formulas of $\psi$.
  In a first step we
  see these maximal state sub-formulas as atomic propositions. The formula
  $\psi$
  can thus be seen as an \LTL formula, and we can build 
  a nondeterministic
  parity word automaton
  $\autopsi=(\Qpsi,\Deltapsi,\qpsi_\init,\couleurpsi)$ over alphabet $2^{\max(\psi)}$
  that accepts
  exactly the models of $\psi$~\cite{vardi1994reasoning}.
\footnote{Note that, as usual for nondeterministic word automata, we
  take the transition function of type $\Deltapsi:\Qpsi\times
  2^{\max(\psi)}\to 2^{\Qpsi}$.} 
  We define the
  \ATA
  $\tauto$  that, given as
  input a $(\max(\psi),\Dirtreei[\phi])$-tree $\ltree$,
  nondeterministically guesses a
path $\tpath$ in   $\liftI[{[n]}]{}{\ltree}\merge\;\unfold{\state}$  and
  simulates $\autopsi$ on it, assuming that the labels it reads
  while following $\projI[\Iphi]{\tpath}$
  in
  its input correctly represent the truth value of formulas in
  $\max(\psi)$ along $\tpath$. 
Recall that $\CKS=(\setstates,\relation,\state_\init,\labS)$; we define
 $\tauto\egdef(\tQ,\tdelta,\tq_{\init},\couleur)$, where
\begin{itemize}
\item $\tQ=\Qpsi\times\setstates$, 
\item $\tq_{\init}=(\qpsi_{\init},\state)$,
\item $\couleur(\qpsi,\state')=\couleurpsi(\qpsi)$, and
\item  for each $(\qpsi,\state')\in\tQ$
  and $a\in 2^{\max(\psi)}$, 
  \[\tdelta((\qpsi,\state'),a)=\bigvee_{\tq'\in\Deltapsi(\qpsi,a)}\bigvee_{
    \state''\in\relation(\state')}[\projI[\Iphi]{\state''},\left(\tq',\state''\right)].\]
\end{itemize}
The intuition is that $\tauto$ reads the current label, chooses nondeterministically
which transition to take in $\autopsi$, chooses a next state in $\CKS$
and proceeds in the corresponding direction in $\Dirtree_{\phi}$. 
Thus, $\tauto$
 accepts a $\max(\psi)$-\labeled $\Dirtreei[\phi]$-tree $\ltree$ iff
there is a path in $\ltree$ that is the $\Dirtreei[\phi]$-narrowing of
some path in $\unfold{\state}$, and 
that satisfies
 $\psi$, where maximal state formulas are considered as atomic propositions. 

Now from $\tauto$ we build the automaton $\bigauto{\phi}$ over
$\Dirtreei[\phi]$-trees labelled with ``real'' atomic propositions in
$\APq$.
In each node it visits, $\bigauto{\phi}$ guesses what should be its
labelling over $\max(\psi)$, it simulates $\tauto$
accordingly, and checks
that the guess it made is correct.
If the path being guessed in $\unfold{\state}$
is currently in node $\noeud$ ending with state $\state'$, and
$\bigauto{\phi}$ guesses that $\phi_{i}$ holds in $\noeud$,
it checks this guess by starting a
 simulation of automaton $\bigauto[\state']{\phi_{i}}$ from node
 $\noeuda=\projI[\Iphi]{\noeud}$ in its input $\ltree$.

For each $\state'\in\CKS$ and each $\phi_{i}\in\max(\psi)$ we first
build $\bigauto[\state']{\phi_i}$, and we 
 let
$\ATA^{i}_{\state'}\egdef\bigauto[\state']{\phi_{i}}=(\tQ^{i}_{\state'},\tdelta^{i}_{\state'},\tq^{i}_{\state'},\couleur^{i}_{\state'})$.
We also let
$\compl{\ATA^{i}_{\state'}}=(\compl{\tQ^{i}_{\state'}},\compl{\delta^{i}_{\state'}},\compl{\tq^{i}_{\state'}},\compl{\couleur^{i}_{\state'}})$
be its dualisation, and we assume w.l.o.g. that all the state sets are
pairwise disjoint. Observe that because each $\phi_{i}$ is a maximal
state sub-formula, we have $I_{\phi_{i}}=I_{\phi}$, so that we do not
need to narrow down automata $\ATA^{i}_{\state'}$ and
$\compl{\ATA^{i}_{\state'}}$.  We define the \ATA
\[\bigauto{\phi}=(\tQ\cup
\bigcup_{i,\state'} \tQ^{i}_{\state'} \cup
\compl{\tQ^{i}_{\state'}},\tdelta',\tq_{\init},\couleur'),\] where the
colours of states are left as they were in their original automaton,
and $\tdelta$ is defined as follows. For states in $\tQ^{i}_{\state'}$
(resp. $\compl{\tQ^{i}_{\state'}}$), $\tdelta$ agrees with $\tdelta^{i}_{\state'}$
(resp. $\compl{\delta^{i}_{\state'}}$), and for $(\qpsi,\state')\in \tQ$ and $a\in
2^{\APq}$ we let $\tdelta'((\qpsi,\state'),a)$ be the disjunction over ${a'\in
    2^{\max(\psi)}}$ of 
\begin{align*}
    \Bigg ( \tdelta\left((\qpsi,\state'),a'\right)
    \et 
      \biget_{\phi_i\in a'}\tdelta^{i}_{\state'}(\tq^{i}_{\state'},a) 
    \;\et
     \biget_{\phi_i\notin
      a'}\compl{\delta^{i}_{\state'}}(\compl{\tq^{i}_{\state'}},a)
    \Bigg ).
\end{align*}

~\newline
$\bm{\phi=\exists}^{\bm{\cobs}}\bm{p.\,\phi':}$

We  build automaton $\bigauto{\phi'}$ that works on $\Dirtreei[\phi']$-trees;
because $\phi$ is hierarchical, we have that $\cobs\subseteq I_{\phi'}$
and we can narrow down $\bigauto{\phi'}$ to work on $\Dirtreei[\cobs]$-trees and obtain
$\ATA_{1}\egdef{\narrow[{\cobs}]{\bigauto{\phi'}}}$. By
Theorem~\ref{theo-simulation} we can
nondeterminise it to get $\ATA_{2}$, which by
Theorem~\ref{theo-projection} we can project with respect to
$p$, finally obtaining $\bigauto{\phi}\egdef \proj{\ATA_{2}}$.
%
%
\end{proof}

\subsection{Proof of Theorem~\ref{theo-decidable-QCTLi}} 
We can now prove Theorem~\ref{theo-decidable-QCTLi}. Let
$\CKS$ be a \CKS, $\state \in \CKS$, and $\phi\in\QCTLsih$. By Lemma~\ref{lem-final} one can build an \ATA
$\bigauto{\phi}$ such that for every
 labelled $\Dirtreei[\phi]$-tree $\ltree$ rooted in
 $\projI[\Iphi]{\state}$, it holds that
$ \ltree\in\lang(\bigauto{\phi}) \mbox{ iff }
      \liftI[{[n]}]{}{\ltree}\merge \;\unfold{\state} \modelst
      \phi$.
      Let $\tree$ be the full $\Dirtreei[\phi]$-tree rooted in
      $\projI[\phi]{\state}$, and let
      $\ltree=(\tree,\lab_{\emptyset})$,  where $\lab_{\emptyset}$ is
       the empty labelling.
      Clearly,
$\liftI[{[n]}]{}{\ltree}\merge
\;\unfold{\state}=\unfold{\state}$, and because $\ltree$ is
rooted in $\projI[\phi]{\state}$, we have
      $ \ltree\in\lang(\bigauto{\phi}) \mbox{ iff }
      \unfold{\state}\modelst \phi$. It only remains to build a simple
       deterministic tree automaton
      $\ATA$ over $\Dirtreei[\phi]$-trees
      such that $\lang(\ATA)=\{\ltree\}$, and check for emptiness of
      the alternating tree automaton
      $\lang(\ATA\cap\bigauto{\phi})$. Because 
      nondeterminisation makes the size of the automaton gain
      one exponential for each nested quantifier on propositions, the
      procedure is nonelementary, and hardness is inherited from the
      model-checking problem for \QCTL~\cite{DBLP:journals/corr/LaroussinieM14}. 



\section{Model-checking hierarchical instances of \SLi}
\label{sec-modelcheckingSL}

In this section we establish that the model-checking problem for \SLi restricted to the class of hierarchical instances is decidable (Theorem~\ref{theo-SLi}).

We build upon the proof in
\cite{DBLP:journals/iandc/LaroussinieM15} that
establishes the decidability of the model-checking problem for \ATLssc
by reduction to the model-checking problem for \QCTLs. The main difference is that
we reduce to the model-checking problem for \QCTLsi instead, using
quantifiers on atomic propositions parameterised with observations that reflect the ones used
in the \SLi model-checking instance. 

Let $(\Phi,\CGSi)$ be a hierarchical instance of the \SLi
model-checking problem, where
$\CGSi=(\Act,\setpos,\trans,\val,\pos_\init,\obsint)$. 
We will  first show how to define a \CKS $\CKS_{\CGSi}$ and a bijection 
$\fplay\mapsto\noeud_{\fplay}$ between the set
of finite plays $\fplay$ starting in a given position $\pos$ and the set of
nodes in $\unfold[\CKS_{\CGSi}]{\state_{\pos}}$.

Then, for every subformula $\phi$ of $\Phi$ and partial
function $f:\Agf \partialto  \Varf$, we will  define a 
\QCTLsi formula $\tr[f]{\phi}$ (that will also depend on $\CGSi$) such that the following holds:
 \begin{proposition}
   \label{prop-redux}
   Suppose that $\free(\phi)\inter\Agf\subseteq\dom(f)$, and $f(a) =
   x$ implies $\assign(a) = \assign(x)$ for all $a \in \dom(f)$. Then
\[\CGSi,\assign,{\fplay}\models\phi\quad \mbox{ if and only if }\quad
\unfold[\CKS_{\CGSi}]{\state_{\fplay}} \modelst \tr[f]{\phi}.\]
 \end{proposition}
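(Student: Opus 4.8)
The plan is to prove the biconditional by structural induction on $\phi$, once the objects the statement presupposes are fixed: the \CKS $\CKS_{\CGSi}$ together with its play-to-node bijection $\fplay\mapsto\noeud_\fplay$, and the family $\tr[f]{\cdot}$. For the model I take the states of $\CKS_{\CGSi}$ to be codes of the positions of $\CGSi$, keep $\val$ as the labelling for the free propositions, and put $\state_\pos\relation\state_{\pos'}$ whenever $\trans(\pos,\jmov)=\pos'$ for some joint action $\jmov$ (the construction moreover records which $\jmov$ labels each edge, so that outcomes can be recognised). The heart of the encoding is the choice of local states: each position is coded as a tuple of local states and, to every observation symbol $\obs\in\Obsf$, one assigns a concrete observation $\cobs_\obs\subseteq[n]$ so that
\[
(\pos,\pos')\in\obsint(\obs)\iff\projI[{\cobs_\obs}]{\state_\pos}=\projI[{\cobs_\obs}]{\state_{\pos'}} .
\]
Because synchronous perfect recall compares plays of equal length position by position, this per-position property lifts pointwise to the statement that two equally long continuations $\sigma,\sigma'$ of the current play satisfy $\sigma\obseq\sigma'$ iff the nodes $\noeud_\sigma,\noeud_{\sigma'}$ have equal $\cobs_\obs$-narrowing, i.e.\ are $\cobs_\obs$-indistinguishable in the sense of Definition~\ref{def-uniformity}. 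This equivalence is the bridge between the two notions of imperfect information and the key technical ingredient of the construction; note it is insensitive to hierarchy, which is needed only to see that $\tr[f]{\Phi}$ lands in \QCTLsih (finer observations receive larger $\cobs$).

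For the translation I follow the \ATLssc-to-\QCTLs reduction of \cite{DBLP:journals/iandc/LaroussinieM15}, adapted to observations. Each strategy variable $x$ carries a block of ``move'' propositions whose labelling selects an action at each node, and I understand the unfolding to be decorated, for every variable free in $\phi$, with the move-labelling that codes the strategy $\assign$ assigns to it. Atoms and Boolean connectives translate homomorphically; a binding is merely recorded in $f$, so that $\tr[f]{(a,x)\phi'}\egdef\tr[f']{\phi'}$ where $f'=f[a\mapsto x]$; a strategy quantifier becomes an observation-annotated second-order quantifier,
\[
\tr[f]{\Estrato[x]{\obs}\phi'}\egdef\existsp[p_x]{\cobs_\obs}(\mathrm{wf}_x\et\tr[f]{\phi'}),
\]
where $\existsp[p_x]{\cobs_\obs}$ abbreviates the block of $\cobs_\obs$-observed quantifiers over the move propositions of $x$ and $\mathrm{wf}_x$ is a \CTL formula forcing them to select exactly one action per node; and for the temporal operators I use a path formula $\theta_f$ stating that a branch follows, at each step, the transition dictated by the moves of the players bound by $f$ (unique because $\trans$ is deterministic in the joint action):
\[
\tr[f]{\X\phi'}\egdef\E(\theta_f\et\X\,\tr[f]{\phi'}),\qquad
\tr[f]{\phi_1\until\phi_2}\egdef\E(\theta_f\et(\tr[f]{\phi_1}\until\tr[f]{\phi_2})).
\]

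The induction is routine except in the quantifier case. Atoms and Boolean connectives are immediate from $\val$ and the induction hypothesis. For the binding $(a,x)$ I check that the two side conditions survive the passage from $(\assign,f)$ to $(\assign[a\mapsto\assign(x)],f')$: the clause $f(a)=x\Rightarrow\assign(a)=\assign(x)$ holds for the updated pair because $\Agf$ and $\Varf$ are disjoint, and $\free(\phi')\cap\Agf\subseteq\dom(f)\cup\{a\}=\dom(f')$ because $\free((a,x)\phi')\cap\Agf=(\free(\phi')\cap\Agf)\setminus\{a\}$. For the temporal cases the hypothesis $\free(\phi)\cap\Agf\subseteq\dom(f)$ forces $f$ to be total, so every agent steering the play is bound, and the side condition $\assign(a)=\assign(f(a))$ makes the unique $\theta_f$-branch coincide, through the bijection, with $\out(\assign,\fplay)$; the \SLi clauses of Definition~\ref{def-SL-semantics} then match the \QCTLsi path semantics node by node, and applying the induction hypothesis at each $\noeud_{\iplay_{\le k}}$ closes the case.

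The genuine obstacle is $\phi=\Estrato[x]{\obs}\phi'$, which I split into two implications. Left to right: from an $\obs$-strategy $\strat$ witnessing the \SLi side I read off, on the continuations of $\fplay$, a labelling of the move propositions of $x$; the equivalence between $\obseq$ and $\cobs_\obs$-indistinguishability established above shows this labelling is $\cobs_\obs$-uniform, it satisfies $\mathrm{wf}_x$, and the induction hypothesis for $\phi'$ (with $\assign[x\mapsto\strat]$, and $f$ unchanged since quantifiers bind no agent) yields $\tr[f]{\phi'}$, hence a witness for $\existsp[p_x]{\cobs_\obs}$. Right to left is the delicate direction: a \QCTLsi witness only constrains the labelling on continuations of $\fplay$, whereas $\Estrato[x]{\obs}$ demands a globally defined $\obs$-strategy. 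I therefore extend the guessed labelling to all finite plays by assigning to each $\obseq$-class the action already fixed on it, if any, and an arbitrary action otherwise; transitivity of $\obseq$ (it is an equivalence relation) guarantees both that no $\obseq$-class carries two different fixed actions --- in particular, two continuations of $\fplay$ that are $\obseq$-equivalent to a common outside play were already given the same action --- and that the resulting strategy is globally $\obs$-uniform while agreeing with the witness on continuations of $\fplay$. The induction hypothesis then transports the witness back to $\CGSi,\assign[x\mapsto\strat],\fplay\models\phi'$. Verifying this well-definedness is the step I expect to demand the most attention.
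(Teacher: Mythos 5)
Your proposal is correct and follows essentially the same route as the paper: the same \CKS with positions coded as tuples of observation-equivalence classes (plus the position itself), the same encoding of strategies by move propositions with a well-formedness constraint and a path-selection formula, and a structural induction whose two delicate points --- the correspondence between $\obseq$ on plays and $\cobs_{\obs}$-indistinguishability on nodes, and the extension of the guessed partial labelling to a globally uniform strategy --- are exactly the ones the paper handles via its remark on $\trobs{\obs_i}$-indistinguishability and its footnote on partial strategies. The only cosmetic differences are your use of $\E(\theta_f\et\cdot)$ where the paper writes $\A(\psiout[f]\impl\cdot)$ (equivalent, since the selected branch is unique) and your explicit strengthening of the induction hypothesis to decorate the input tree with the move-labellings of the free variables, a point the paper leaves implicit.
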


 Applying this to the sentence $\Phi$, any assignment $\assign$, and
 the empty function $\emptyset$, we get that
 \[\CGSi,\assign,\pos_{\init} \models \Phi \quad \mbox{if and only if}\quad
\unfold[\CKS_{\CGSi}]{s_{\pos_{\init}}} \models
 \tr[\emptyset]{\Phi}.\]

 \head{Constructing the \CKS $\CKS_{\CGSi}$} We will define
 $\CKS_{\CGSi}$ so that (indistinguishable) nodes in its
 tree-unfolding correspond to (indistinguishable) finite plays in
 $\CGSi$.  The \CKS will make use of atomic
 propositions $\APv\egdef\{p_{\pos}\mid\pos\in\setpos\}$ (that we
 assume to be disjoint from $\APf$).  The idea is that $p_{\pos}$
 allows the \QCTLsi formula $ \tr[\emptyset]{\Phi}$ to refer to the current position $\pos$ in
 $\CGSi$.
 Later we will see that $\tr[\emptyset]{\Phi}$ will also make use of
 atomic propositions $\APm\egdef\{p_{\mov}^{\var}\mid\mov\in\setmoves
 \mbox{ and }\var \in \Varf\}$ that we assume, again, are disjoint
 from $\APf \cup \APv$. This allows the formula to use
 $p_{\mov}^{\var}$ to refer to the actions $\mov$ {advised by
   strategies $x$. }

 Suppose $\Obsf=\{\obs_{1},\ldots,\obs_{n}\}$. For $i \in [n]$, define
 the local states
 $\setlstates_{i}\egdef\{\eqc{\obs_{i}}\mid\pos\in\setpos\}$ where
 $\eqc{\obs}$ is the equivalence class of $\pos$ for relation
 $\obseq$. Since we need to know the actual position of the \CGSi to
 define the dynamics, we also let $\setlstates_{n+1}\egdef\setpos$.

Define the \CKS $\CKS_{\CGSi}\egdef(\setstates,\relation,\state_{\init},\lab')$ where
\begin{itemize}
\item $\setstates\egdef\{\state_{\pos} \mid \pos\in\setpos\}$,
\item $\relation\egdef\{(\state_{\pos},\state_{\pos'})\mid
  \exists\jmov\in\Mov^{\Agf} \mbox{ s.t. }\trans(\pos,\jmov)=\pos'\}
  \subseteq \setstates^2$,
  \item $\state_{\init}\egdef\state_{\pos_{\init}}$,
\item $\lab'(\state_{\pos})\egdef\val(\pos)\union \{p_{\pos}\} \subseteq \APf \cup \APv$,
\end{itemize}
and $\state_{\pos}\egdef(\eqc{\obs_{1}},\ldots,\eqc{\obs_{n}},\pos) \in \prod_{i\in [n+1]}\setlstates_{i}$.


We now show how to connect finite plays in $\CGSi$ with nodes in the tree unfolding of $\CKS_{\CGSi}$.
For every finite play $\fplay=\pos_{0}\ldots\pos_{k}$, define
the node $\noeud_{\fplay}\egdef \state_{\pos_{0}}\ldots \state_{\pos_{k}}$ in
$\unfold[\CKS_{\CGSi}]{\state_{\pos_{0}}}$ (which exists, by definition of
$\CKS_{\CGSi}$ and of tree unfoldings).  Note that the mapping
$\fplay\mapsto\noeud_{\fplay}$ defines a bijection between the set
of finite plays and the set of
nodes in $\unfold[\CKS_{\CGSi}]{\state_{\init}}$.

\head{Constructing the \QCTLsih formulas $\tr[f]{\phi}$}
 We now describe how to transform an \SLi formula $\phi$ and a partial
function $f:\Agf \partialto  \Varf$ into a \QCTLsi
formula $\tr[f]{\phi}$ (that will also depend on $\CGSi$).
Suppose that $\Mov=\{\mov_{1},\ldots,\mov_{\maxmov}\}$, and define $\tr[f]{\phi}$ by induction on $\phi$. We begin with the simple cases:
$\tr[f]{p} 		 \egdef p$; $\tr[f]{\neg \phi} 	 \egdef \neg \tr[f]{\phi}$; and 
$\tr[f]{\phi_1\ou\phi_2}  \egdef \tr[f]{\phi_1}\ou\tr[f]{\phi_2}$.

We continue with the second-order quantifier case:
\begin{align*}
\tr[f]{\Estrato{\obs}\phi}	& \egdef  \exists^{\trobs{\obs}} p_{\mov_{1}}^{\var}\ldots \exists^{\trobs{\obs}} p_{\mov_{\maxmov}}^{\var}. \phistrat \et \tr[f]{\phi}
\end{align*}
where $\trobs{\obs_i} \egdef \{j\mid \obsint(\obs_{i})\subseteq\obsint(\obs_{j})\}$, and
\[
\phistrat \egdef \A\G\bigou_{\mov\in\Mov}(p_{\mov}^{\var}\et\biget_{\mov'\neq\mov}\neg p_{\mov'}^{\var}).
\]
We describe this formula in words. For each possible action $\mov\in\Mov$, an
existential quantification on the atomic proposition $p_{\mov}^{\var}$
``chooses'' for each finite play $\fplay = \pos_0 \ldots \pos_k$ of $\CGSi$
(or, equivalently, for each node $\noeud_{\fplay}$ of the tree 
$\unfold[\CKS_{\CGSi}]{\state_{{\pos_0}}}$) whether strategy $\var$
plays action $\mov$ in $\fplay$ or not. Formula $\phistrat$
ensures that  in each finite
play,  exactly one action is chosen for strategy $\var$, and thus that atomic propositions
$p_{\mov}^{\var}$ indeed characterise a
strategy, call it $\sigma_\var$.\footnote{More precisely, if $\phistrat$
  holds in node $\noeud_{\fplay}$, it ensures that propositions from $\APm$ define a
partial strategy, defined on all nodes of the subtree rooted in
$\noeud_{\fplay}$. This is enough because \SLi can only talk about the future:
when evaluating a formula in a finite play $\rho$, the definition of
strategies on plays that do not start with $\rho$ is irrelevant.}

Moreover,  a quantifier with concrete observation $\trobs{\obs_{i}}$
receives information corresponding to
observation 
$\obs_{i}$ (observe that for all $i\in [n]$, $i\in\trobs{\obs_{i}}$) as well as information
corresponding to
coarser observations.
Note that including all coarser observations does not increase the
information accessible to the quantifier: indeed, one can show that
two nodes are $\{i\}$-indistinguishable if and only if they are
$\trobs{\obs_{i}}$-indistinguishable.
However, this definition of
$\trobs{\obs_{i}}$ allows us to obtain hierarchical formulas.
Since quantification on propositions $p_{\act}^{\var}$ is done
uniformly with regards to concrete observation $\trobs{\obs_i}$, 
it follows that $\sigma_\var$ is an $\obs_i$-strategy.

Here are the remaining cases:
\begin{align*}
\tr[f]{\bind{\var}\phi}	& \egdef \tr[{f[\ag\mapsto \var]}]{\phi} 
\\
\tr[f]{\X\phi_1} 		& \egdef \A \big (\psiout[f] \impl \X\tr[f]{\phi_1} \big ) \\
\tr[f]{\phi_{1}\until\phi_{2}}	& \egdef \A \big (\psiout[f] \impl \tr[f]{\phi_{1}}\until\tr[f]{\phi_{2}} \big )
\end{align*}
where 
\[
\psiout[f]\egdef \G\left(
  \biget_{\pos\in\setpos}\biget_{\jmov\in\Mov^{\Agf}}\left ( p_{\pos} \et
  \biget_{\ag\in\Agf}p_{\jmov_{\ag}}^{f(\ag)}\impl \X
  p_{\trans(\pos,\jmov)} \right )
\right).
\]



The formula $\psiout[f]$ is  used to select the unique path assuming that every player, say $\ag$, 
follows the strategy $\sigma_{f(\ag)}$.


This completes the justification of Proposition~\ref{prop-redux}.

\head{Preserving hierarchy}
To complete the proof we show that $\tr[\emptyset]{\Phi}$ is a
hierarchical \QCTLsi formula.
This simply follows from the fact that $\Phi$ is hierarchical in
$\CGSi$ and that for every two observations $\obs_{i}$ and $\obs_{j}$ in $\Obsf$ such that
$\obsint(\obs_{i})\subseteq\obsint(\obs_{j})$, by definition of $\trobs{\obs_{k}}$
we have that $\trobs{\obs_{i}}\subseteq \trobs{\obs_{j}}$.

This completes the proof of Theorem~\ref{theo-SLi}.

\section{Outlook}

We introduced \SLi, a logic for reasoning about strategic behaviour in 
multi-player games with imperfect information.  The syntax mentions observations, and thus allows
one to write formulas that talk about dynamic observations. 
We isolated the class of hierarchical formula/model pairs $(\Phi,\CGSi)$ and proved that
one can decide if $\CGSi \models \Phi$. The proof reduces (hierarchical) instances to (hierarchical) formulas
of \QCTLsi, a low-level logic that we introduced, and that serves as a natural bridge between \SLi (that talks about 
players and strategies) and automata constructions.

We believe that
\QCTLsi is of independent interest and deserves study in its own
right. Indeed, it subsumes  MSO with equal-level predicate, which is
undecidable and of which we know no decidable fragment; yet its syntax and models
make it possible to define a natural fragment (the hierarchical
fragment) that has a simple definition,  a decidable model-checking
problem, and is suited for strategic reasoning.

Since one can alternate quantifiers in \SLi, our decidability result goes beyond synthesis. As we showed, we can use it to decide if 
a game that yields hierarchical observation has a Nash equilibrium. A crude
but easy analysis of our main decision procedure shows it is non-elementary. 

This naturally leads to a number of avenues for future work: define
and study the expressive power and computational complexity of fragments of \SLi~\cite{DBLP:conf/concur/MogaveroMPV12}; 
adapt the notion of hierarchical instances to allow for situations in which hierarchies can change infinitely often along a 
play~\cite{DBLP:conf/atva/BerwangerMB15}; and extend the logic to include
epistemic operators for individual and common knowledge, as is done in \cite{DBLP:conf/cav/CermakLMM14}, 
which are important for reasoning about distributed systems~\cite{fagin1995reasoning}.




\bibliographystyle{IEEEtran}
\bibliography{biblio}



\end{document}
